\newtheorem{theorem}{Theorem}
\newtheorem{lemma}[theorem]{Lemma}
\newtheorem{proposition}[theorem]{Proposition}
\newtheorem{remark}[theorem]{Remark }
\newtheorem{example}[theorem]{Example}
\newtheorem{definition}[theorem]{Definition}
\definecolor{verde}{rgb}{0,0.7,0}
\newcommand{\be}{\begin{equation}}
\newcommand{\ee}{\end{equation}}
\newcommand{\bea}{\begin{eqnarray}}
\newcommand{\eea}{\end{eqnarray}}
\newcommand{\bean}{\begin{eqnarray*}}
\newcommand{\eean}{\end{eqnarray*}}
\begin{document}

\begin{frontmatter}

\title{\LARGE \bf
On the Herdability  of Linear Time-Invariant Systems with Special  
Topological Structures} 

\thanks[footnoteinfo]{Corresponding Author. Provisionally accepted in Automatica, currenty under review.
}

\author[Padova]{Giulia De Pasquale}\ead{giulia.depasquale@phd.unipd.it},    
\author[Padova]{Maria Elena Valcher\thanksref{footnoteinfo}}\ead{meme@dei.unipd.it}              

\address[Padova]{ Dipartimento di Ingegneria dell'Informazione
 Universit\`a di Padova, 
    via Gradenigo 6B, 35131 Padova, Italy,}  

\begin{keyword}                           
Herdability,  multi-agent systems, social networks, linear systems, signed graphs, clustering balance.
\end{keyword}                             

\begin{abstract}                          
In this paper we investigate the herdability property, namely the capability of a system to be driven towards 
the (interior of the) positive orthant, for   linear time-invariant state space models. 
 Herdability 
of certain matrix pairs $(A,B)$, where $A$ is the adjacency matrix of a   multi-agent network, 
and $B$ a selection matrix that singles out a subset of the agents (the ``network leaders"),  is explored.
The cases when   the   graph  associated with $A$, ${\mathcal G}(A)$, is directed and clustering balanced   (in particular, structurally balanced), or 
it  has a   tree topology and there is a single leader, are investigated.
\end{abstract}

  \end{frontmatter}
\section{Introduction}

 Networked multi-agent systems   have been the subject of an impressive number of contributions in the last two decades,  due to their wide range of application \cite{antsaklis2007special,baillieul2007control,OlfatiFaxMurray,zhang2012network}.   As a result,  the  controllability   of this class of systems, namely the property of the system state to be driven towards any point of the state space, has 
 attracted a lot of interest, mainly aimed at deriving conditions that rely on the communication graph structure, rather than on the specific weights attributed to the graph edges \cite{egerstedt2012interacting,struct_controllability,ParlangeliNotarstefano,graph_controllability,sign_controllability,YaziciogluAbbasEger,ZhangCaoCamlibel}. 
However, there are many research fields, such as biology \cite{Jacquez}, chemistry \cite{comp_gen}, sociology \cite{sociology1}, neuroscience \cite{gupta},   social networks \cite{assign_and_appraise,HomophilyMei} 
etc. for which, due to the nature of the   describing variables, investigating if the system state can be brought towards any point of the state space   is not of practical interest, and may lead to  overly restrictive conditions on the model into play.   Consider, for example, the model of a chemical  reactor and   assume that the state vector  represents some  reactant concentrations. In this context,   it is pointless to impose that the state entries may assume any real value, including the negative ones, while it makes sense to impose that the concentrations of all the elements in the chemical reactor can be brought over a minimum level.  
 When dealing with ecological systems, describing the coexistence of different species in the same habitat,
it is of interest to maintain all population levels above  specific thresholds in order to prevent their extinction.  In the context of   marketing advertisement, it is of interest to devise strategies  targeting    some individuals to bring the consumption level/usage of a certain good/service  for  a  group of consumers over a certain threshold.  In many electoral systems there is an   election threshold  that represents  the minimum share of   votes which a candidate or political party has to achieve to become entitled to any representation in a legislature. 
 It is in  contexts like these, in which (positive) thresholds come into play, that the investigation of a weaker concept with respect to controllability, known in the  literature as herdability \cite{Ruf_Shamma,Ruf_arxiv}, becomes of interest.
Herdability refers to the possibility of driving the   state variable towards the interior of the positive orthant.  More precisely, a system is herdable if, for every choice of the initial conditions, there exists a control input that drives all the state variables over a positive threshold.  Clearly, controllable systems are also herdable, but the converse is not true.

  While there is an extensive literature on controllability and structural controllability of networked systems \cite{egerstedt2012interacting,struct_controllability,ParlangeliNotarstefano,graph_controllability,YaziciogluAbbasEger,ZhangCaoCamlibel},
the research  on herdability is still at an early stage.   In \cite{Meng_herd} the herdability of leader/follower networks is studied. The leaders are assumed to be equipped with an external control and the relationships among the agents in the network can be either cooperative or competitive. The leader group selection problem is investigated in this context, with a special focus on structurally balanced network configurations.  In  \cite{She_herd}  the herdability of leader/follower signed networks is investigated starting from the network topology and  some sufficient conditions for herdability based on $1$-walks and $2$-walks in the graph are provided.  The results are then extended to the case of acyclic graphs  with   walks of arbitrary length.
In \cite{Ruf_arxiv} a connection between system herdability and sign distribution over some specific graph topologies is established.  The notion of \textit{sign herdability} is introduced in association with classes of systems whose herdability is deduced from their sign patterns. In  \cite{She_automatica}, a characterization of the controllable subspace is given based on (generalised equitable)  graph partitions, and some sufficient conditions for the system herdability are provided. The concept of quotient graph is also exploited in the study of herdability of the original graph. Finally, in \cite{Ruf_Shamma}   how the underlying graph structure affects the system herdability is investigated for signed and directed graphs, and the herdability of a subset of nodes in a graph is studied, also focusing on the herdability of  directed out-branching rooted graphs with a single input. 

  In this paper we study the herdability of linear time invariant systems described by a matrix pair $(A,B)$,  assuming that the matrix $A$ (and the associated graph ${\mathcal G}(A)$)   represents a multi-agent system and in particular a social network, while $B$ is a selection matrix that singles out the agents that are subject to a direct control action (the ``leaders"). We focus on special topologies of the graph $\mathcal{G}(A)$,   as the tree topology,   or the case when the graph is directed and structurally/clustering balanced, configurations that are quite typical in social networks.
    

 
In detail, in Section \ref{s2} some preliminaries,  together with the definition of  herdability of a matrix pair $(A,B)$ and the main characterisation available in the literature, are given. 
Section \ref{s4} investigates  herdability of networked systems with leader/follower topologies and a directed communication graph that is structurally or clustering balanced. On the other hand,    Section \ref{s5} focuses on the case of tree topologies and a single leader. Section \ref{s6} brings some conclusions, while the Appendix provides all the technical results required to prove the results  of Sections \ref{s4} and \ref{s5}.

 This paper extends  the preliminary results included in our recent paper \cite{ge_herdcdc21}. Specifically,    
  Propositions \ref{p1} and \ref{p2} in section \ref{s5} 
can also be found in \cite{ge_herdcdc21}, while Proposition \ref{p3} has appeared in \cite{ge_herdcdc21} without a proof.
Lemma \ref{lemmaB} extends Lemma 3 in 
\cite{ge_herdcdc21}, while Proposition \ref{ablocchi_new} provides a generalisation of Proposition 7 in \cite{ge_herdcdc21}. All the remaining results are new.

\section{Preliminaries and   definition of herdability of a pair $(A,B)$}\label{s2}

We start the paper by providing some basic definitions and notation that will be used in the following.
\\
Given   $k, n\in \mathbb{Z}$, with $k <n$,   the symbol   $[k,n]$   denotes the  integer set  $\{k, k+1, \dots, n\}$.
The $(i,j)$-th entry of a matrix $A$ is denoted 
 by $[A]_{ij}$, while the $i$-th entry of a vector ${\bf v}$ by $[{\bf v}]_i$.
  The notation $M= {\rm diag} \{m_1, m_2,  \dots,  m_n\}$ indicates a   diagonal matrix  with diagonal entries $m_1, m_2,  \dots,  m_n$.
We let   ${\bf e}_i$ denote the $i$-th vector of the canonical basis of $\mathbb{R}^n$,  where the dimension $n$ will be   clear from the context. 
 Accordingly, $M {\bf e}_j$ denotes the $j$-th column of $M$,  and
 ${\bf e}_i^\top M$ the $i$-th row of $M$.
  Every nonzero multiple of a canonical vector is called {\em monomial vector}.
The vectors ${\bf 1}_n$ and ${\bf 0}_n$ denote the $n$-dimensional vectors whose entries are all $1$ or $0$, respectively.
 Similarly, the symbol ${\bf 0}_{p \times m}$ denotes the $p\times m$ matrix with all zero entries.

   Given a vector ${\bf v} \in {\mathbb R}^n$, the set    
   $\overline{\rm ZP}({\bf v}) = \{ i \in [1,n]: [{\bf v}]_i \neq 0\}$  denotes the  
   {\em non-zero pattern} of ${\bf v}$  \cite{TACconPaolo}.  
   A nonzero vector ${\bf v}$ is said to be {\em unisigned} \cite{Ruf_arxiv} if all its nonzero entries have the same sign.
%
 Given a matrix $A\in \mathbb{R}^{n \times m}$, the notation ${\rm Im (A)}$ denotes the image of the matrix $A$.
 A matrix  (in particular, a vector)
 $A$ is   {\em nonnegative}  (denoted by  $A \ge 0$) \cite{BookFarina}  if all its entries are nonnegative.
  $A$ is   {\em strictly positive} (denoted by  $A \gg 0$) if all its entries   are positive.
  A matrix $P\in {\mathbb R}^{n \times n}$ is a {\em permutation matrix} if its columns are a permuted version of the columns of the identity matrix $I_n$.
To any matrix   $A\in {\mathbb R}^{n \times n}$, we associate the {\em signed and weighted directed graph}
${\mathcal G}(A)= ({\mathcal V}, {\mathcal E}, A),$
where ${\mathcal V} =[1,n]$ is the set of nodes.
The set ${\mathcal E}\subseteq {\mathcal V}\times {\mathcal V}$ is  the set of arcs  (edges)
connecting  the nodes, while the matrix $A\in {\mathbb R}^{n \times n}$ is the   {\em adjacency matrix} of the graph. There  is an arc $(j, i) \in \mathcal{E}$ from $j$ to $i$ if and only if $[A]_{ij} \ne 0$.  When so, $[A]_{ij}$ is the {\em weight} of the arc.
  A sequence of $k$ consecutive arcs 
 $(j, j_2),(j_2, j_3),$ $\dots, (j_k,i)\in {\mathcal E}$ is a {\em walk}  
 of length $k$ 
 from $j$  to 
 $i$.
A walk from $j$   to $i$ is said to be {\em positive (negative)} if the product of the weights of the edges that compose the walk is positive (negative). 
A {\em minimum walk} from $j$   to $i$ is a walk of minimum length connecting the two nodes. We define the   {\em distance} $d(j,i)$ from the node $j$ to the node $i$
as the length of the minimum walk from $j$ to $i$. If there is no  walk from $j$ to $i$ then $d(j,i)=+\infty$. 
 If $A$ is a symmetric matrix, namely $A=A^\top$, the graph ${\mathcal G}(A)$ is (signed, weighted and) {\em undirected}, and 
 the concepts of walk and distance become symmetric.
  An undirected graph ${\mathcal G}(A)$ is {\em  connected} if for every pair of vertices 
  there is a walk connecting them.

A graph ${\mathcal G}(A)$ is said 
to be {\em clustering balanced} (with $k\ge 2$ clusters)  \cite{Bullo2020,davis67} if all    its nodes can be partitioned into $k$ disjoint subsets ${\mathcal V}_1, {\mathcal V}_2, \dots, {\mathcal V}_k$   in such a way that $\forall i,j \in    {\mathcal V}_p$, $p \in [1,k]$, we have $[A]_{ij} \geq 0$, and $\forall i \in {\mathcal V}_p$ and $\forall j \in {\mathcal V}_q$, $p,q \in  [1,k],$ $p \neq q$,  we have $[A]_{ij}\leq 0$. 
Note that clustering balance for $k=2$ clusters is generally known in the literature as {\em structural balance} 
\cite{Altafini2013,Bullo2020,davis67}. 
\medskip

%


The concept of herdability of linear and time-invariant state space models
 described by a matrix pair $(A,B)$, with $A \in {\mathbb R}^{n \times n}$ and $B \in {\mathbb R}^{n \times m}$,
 has been defined in various ways \cite{Ruf_Shamma,Ruf_arxiv,She_herd}. 
In this paper we are   interested 
 in the behavior of all state variables, rather than in the behavior of a subset of them.   Consequently, 
 we assume the following   definition
 (which is equivalent to Definition 3 in \cite{Ruf_arxiv}). 
 \smallskip

\begin{definition}  Given a (continuous-time or discrete-time)   (linear and time-invariant) state space model of dimension $n$ with $m$ inputs,
described by a pair $(A,B), \ A\in {\mathbb R}^{n\times n}$ and $B\in {\mathbb R}^{n\times m}$, 
the system (the pair) is said to be {\em 
herdable} if for every ${\bf x}(0)$ and every $h> 0$, there exists a time $t_f > 0$ and an input ${\bf u}(t), t\in [0,t_f),$ that drives the state of the system from ${\bf x}(0)$ to ${\bf x}(t_f) \ge h {\bf 1}_n$\footnote{
It is worth noticing that, when dealing with linear systems, requiring that each of the agents' states can be brought above a common positive threshold $h$ is equivalent to requiring that each state variable $x_i$  can be brought above a specific positive threshold $h_i$ or that at some time instant the state vector becomes strictly positive.}.
\end{definition}

Both in the continuous-time case and in the discrete-time case, herdability reduces to a condition on the controllability matrix associated with the pair $(A,B)$.
\smallskip

 \begin{proposition} [Corollary 1, \cite{Ruf_arxiv}] A pair $(A,B), A\in {\mathbb R}^{n\times n}$ and $B\in {\mathbb R}^{n\times m}$, is herdable if and only if 
  ${\rm Im}({\mathcal R}(A,B))$ includes   a strictly positive vector, where 
\be
{\mathcal R}(A,B) := \begin{bmatrix} B & AB & A^2B & \dots & A^{n-1} B \end{bmatrix}
\label{reach_mat}
\ee
is the {\em controllability matrix} of the pair $(A,B)$.
\end{proposition}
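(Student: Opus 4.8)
The plan is to identify ${\rm Im}({\mathcal R}(A,B))$ with the \emph{reachable subspace} of the system, that is, the set of states attainable from the zero initial condition, and then to exploit the fact that a subspace is closed under scalar multiplication. First I would recall the variation-of-constants description of the state. In continuous time, for any final time $t_f>0$ and any input,
\[
{\bf x}(t_f) = e^{A t_f} {\bf x}(0) + \int_0^{t_f} e^{A(t_f-\tau)} B\, {\bf u}(\tau)\, d\tau,
\]
while in discrete time, with $t_f=N$,
\[
{\bf x}(N) = A^N {\bf x}(0) + \sum_{k=0}^{N-1} A^{N-1-k} B\, {\bf u}(k).
\]
In both cases the forced term ranges exactly over ${\rm Im}({\mathcal R}(A,B))$ as the input varies: in discrete time this follows for $N=n$ directly from the definition of ${\mathcal R}(A,B)$, and in continuous time it follows for every $t_f>0$ from the invertibility of the reachability Gramian on this subspace, together with the Cayley--Hamilton theorem. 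Hence ${\bf x}(t_f)$ can always be written as a fixed offset $e^{A t_f}{\bf x}(0)$ (resp. $A^N {\bf x}(0)$) plus an arbitrary vector ${\bf w} \in {\rm Im}({\mathcal R}(A,B))$.

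For necessity, I would specialise the definition to ${\bf x}(0)={\bf 0}$ and $h=1$. Herdability then yields a reachable state ${\bf w} \in {\rm Im}({\mathcal R}(A,B))$ with ${\bf w} \ge {\bf 1}_n$, and any such ${\bf w}$ is strictly positive; thus the subspace contains a strictly positive vector. For sufficiency, let ${\bf v} \gg 0$ belong to ${\rm Im}({\mathcal R}(A,B))$, and fix an arbitrary ${\bf x}(0)$ and $h>0$. I would choose any admissible final time $t_f$ (any positive value in continuous time, $t_f=n$ in discrete time), so that the offset ${\bf p} := e^{A t_f}{\bf x}(0)$ (resp. $A^n {\bf x}(0)$) is a fixed finite vector. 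Since ${\rm Im}({\mathcal R}(A,B))$ is a linear space, $\alpha {\bf v}$ belongs to it for every scalar $\alpha$, and because ${\bf v}$ has all positive entries, choosing $\alpha$ large enough makes $\alpha [{\bf v}]_i > h - [{\bf p}]_i$ hold for every $i$. Taking the input that realises ${\bf w} = \alpha {\bf v}$ then gives ${\bf x}(t_f) = {\bf p} + \alpha {\bf v} \ge h {\bf 1}_n$, as required.

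The only delicate point is the first step, namely establishing that the forced response exhausts ${\rm Im}({\mathcal R}(A,B))$ and, in the continuous-time case, that this holds for \emph{every} positive $t_f$ rather than only asymptotically. I would settle this with the classical Kalman reachability argument (Gramian invertibility on the reachable subspace combined with Cayley--Hamilton). Once this identification is in place, the scaling argument driven by the strict positivity of ${\bf v}$ closes both implications with no further computation, since the free-evolution term is merely a bounded offset that a sufficiently large positive multiple of ${\bf v}$ overwhelms entrywise.
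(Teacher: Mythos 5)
Your proof is correct. Note that the paper does not prove this statement itself --- it imports it verbatim as Corollary~1 of the cited reference --- so there is no in-paper argument to compare against; your route (identify ${\rm Im}({\mathcal R}(A,B))$ with the reachable subspace via the Kalman/Gramian argument, get necessity from ${\bf x}(0)={\bf 0}$, and get sufficiency by scaling the strictly positive vector ${\bf v}$ until $\alpha{\bf v}$ dominates the free-response offset entrywise) is precisely the standard argument behind that cited result, and all the steps, including the uniformity of the continuous-time reachable set over every $t_f>0$, are handled correctly.
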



%

\section{Herdability of pairs $(A,B)$ corresponding to a directed graph ${\mathcal G}(A)$ with $m$ leaders}\label{s4}

In this section we
investigate the herdability  of
the pairs $(A,B)$, 
where $A\in {\mathbb R}^{n\times n}$ is any real matrix and $B \in {\mathbb R}^{n\times m}, n > m, $ is a {\em selection matrix}, namely an $n \times m$ submatrix of the identity matrix $I_n$. 
This set-up, previously considered in \cite{Meng_herd,She_herd,She_automatica}, can be interpreted as the description  of a network of mutually interacting agents, each of them associated with a scalar describing variable.
In the network, a subset   of $m$ agents (the indices of the nonzero rows of $B$) is selected as ``leaders", by this meaning that such agents are the target of a direct external action, aiming to influence the states  of the remaining agents, the  ``followers"\footnote{\color{blue}Note that it is the choice of where to apply the external inputs, and not the communication network, that determines the leaders.}.
This set-up is also very similar to the one adopted in
\cite{YaziciogluAbbasEger,ZhangCaoCamlibel},  where controllability properties of continuous systems, known in the literature as ``networks of diffusively coupled agents" and described by a pair
$\left(- {\mathcal L}, B \right)$, where ${\mathcal L}$ is a Laplacian matrix and $B$ a selection matrix, 
have been  investigated.
Both in   references \cite{YaziciogluAbbasEger,ZhangCaoCamlibel} and in 
\cite{Meng_herd,She_herd,She_automatica}, the key idea is to exploit the structure  of the signed and weighted directed graph ${\mathcal G}(A)$  and the specific selection of the leaders to deduce some (in general, sufficient) conditions for controllability/herdability  to hold.
Indeed, herdability of a system described by such a pair $(A,B)$   represents the capability of the group of $m$ leaders to 
simultaneously bring their own states and those of the followers over a certain threshold.
The results we will derive in this setting can be used not only to analyse an existing network, but also for design purposes. Indeed, one can look for all possible 
 leaders' selections  that allow this property to hold and, in particular, for the  smallest sets of leaders that ensure herdability of the resulting network.
 The interest in this kind of problems is quite immediate if we think, for instance, of the marketing or the electoral system examples, where 
 it is quite relevant to understand which choices of the leaders ensure the success of the marketing/propaganda policies. Similarly, in an ecological system, 
 identifying policies that target specific individuals in the various populations to prevent their extinction is of utmost importance.

Finally, based on 
Lemma \ref{lemmaTinput} in the Appendix,  
we can always   reduce ourselves to the  case $(A,B)$, with  $B$ a selection matrix,
every time  all the nonzero rows of $B$ are linearly independent.

We  first consider the case when ${\mathcal G}(A)$,  the communication graph associated with $A$, is structurally balanced  or, more generally, clustering balanced  \cite{Davis}. These configurations are of strong interest in sociological contexts since they describe the case
when individuals split in factions: individuals within the same faction have friendly/cooperative behaviors, while individuals belonging to different factions
behave in a competitive/antagonistic way. This may be the case when considering fans supporting different sport teams, political parties supporters or animal species competing for the same natural resources.
In particular, a structurally balanced network
 represents an intrinsically stable social configuration \cite{DeGroot,Heider}.

It must be remarked that bringing all the agents' states over a certain positive threshold in   the  presence of competitive interactions is nontrivial,   especially when  the antagonistic relationships between individuals of different factions tend to stimulate somewhat opposite reactions  that lead to opposite signs of the 
 variables   involved in the system dynamics.
\\
 Proposition \ref{kblocchi}, below, addresses the case when the directed graph ${\mathcal G}(A)$ is clustering balanced,   the set of leaders coincides with one of the clusters (without loss of generality the first one) and for each follower 
 in the other clusters there is a leader whose distance from that follower is smaller than the distance from any other follower belonging to a different cluster.

\smallskip

\begin{proposition}\label{kblocchi} 
Assume that $\mathcal{G}(A)$ is a clustering balanced   directed graph with $k$ clusters, ${\mathcal V}_1, \dots, {\mathcal V}_k$, 
and  that the set of leaders coincides with one of the   clusters, e.g., $\mathcal{L} = {\mathcal V}_1$.
If
 for every $p\in [2,k]$ and  every $i \in {\mathcal V}_p$ there exists $\ell_i\in {\mathcal L}={\mathcal V}_1$ such that  $d(\ell_i, i) < d(\ell_i, j),   \forall j \in \cup_{h\not\in \{1,p\}} {\mathcal V}_h$,
then the pair $(A,B)$ is herdable.
 \end{proposition}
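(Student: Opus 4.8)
The plan is to work entirely with the algebraic characterisation of herdability recalled above (\cite{Ruf_arxiv}): it suffices to exhibit a strictly positive vector in ${\rm Im}({\mathcal R}(A,B))$. Since $B$ is the selection matrix picking out ${\mathcal L}={\mathcal V}_1$, its columns are exactly the vectors ${\bf e}_\ell$, $\ell\in{\mathcal V}_1$, so every coordinate indexed by ${\mathcal V}_1$ can be adjusted freely and independently by adding a combination $\sum_{\ell\in{\mathcal V}_1}\beta_\ell {\bf e}_\ell$ at the very end. Consequently I would first reduce the problem to constructing a vector $w\in{\rm Im}({\mathcal R}(A,B))$ that is strictly positive on the follower coordinates $\bigcup_{p\ge 2}{\mathcal V}_p$ (its values on ${\mathcal V}_1$ being irrelevant); adding $\sum_{\ell\in{\mathcal V}_1}\beta_\ell{\bf e}_\ell$ with each $\beta_\ell$ large then yields a vector $\gg 0$ still lying in the image.

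The heart of the argument is a sign lemma for the dedicated leader of each follower. Fix $p\ge 2$, $i\in{\mathcal V}_p$, its leader $\ell_i$, and set $d:=d(\ell_i,i)$; I claim $[A^{d}]_{i\ell_i}<0$. Indeed, $[A^{d}]_{i\ell_i}$ is the sum of the weight-products over all walks of length exactly $d$ from $\ell_i$ to $i$, and these are precisely the minimum walks. If such a walk visited a node $j$ of a cluster ${\mathcal V}_h$ with $h\notin\{1,p\}$ at an interior step $s<d$, we would get $d(\ell_i,j)\le s<d=d(\ell_i,i)$, contradicting the hypothesis $d(\ell_i,i)<d(\ell_i,j)$. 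Hence every minimum walk stays inside ${\mathcal V}_1\cup{\mathcal V}_p$, where the induced sign pattern is structurally balanced: intra-cluster edges carry strictly positive weights and the ${\mathcal V}_1\leftrightarrow{\mathcal V}_p$ edges strictly negative ones. Since each such walk starts in ${\mathcal V}_1$ and ends in ${\mathcal V}_p$, it crosses between the two clusters an odd number of times, so its weight-product is strictly negative; summing over the (at least one) minimum walks gives $[A^{d}]_{i\ell_i}<0$ with no possible cancellation. Introducing $v_\ell(\mu):=\sum_{t=0}^{n-1}\mu^{-t}A^t{\bf e}_\ell\in{\rm Im}({\mathcal R}(A,B))$, this yields $[-v_{\ell_i}(\mu)]_i=\mu^{-d}\bigl(|[A^{d}]_{i\ell_i}|+O(\mu^{-1})\bigr)>0$ for all large $\mu$, i.e. the leading, minimum-distance term at coordinate $i$ has the desired positive sign.

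I would then assemble $w=-\sum_{p\ge 2}\sum_{i\in{\mathcal V}_p}c_i\,v_{\ell_i}(\mu)$ with positive coefficients $c_i$. At a follower coordinate $i$ the ``own'' term contributes the guaranteed positive quantity above, of order $\mu^{-d(\ell_i,i)}$; the difficulty — and the main obstacle of the proof — is to dominate the cross terms $-c_{i'}[v_{\ell_{i'}}(\mu)]_i$ coming from the remaining followers $i'$. Here the distance hypothesis is exploited a second time to control these cross terms. If $i$ and $i'$ lie in \emph{different} non-leader clusters, the confinement argument applied to $\ell_{i'}$ forces $d(\ell_{i'},i)>d(\ell_{i'},i')$, so relative to $\ell_{i'}$ the interference at $i$ is of strictly higher order than its action on its own target. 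If instead $i,i'$ lie in the \emph{same} cluster ${\mathcal V}_p$, the odd-crossing lemma applies verbatim to every minimum walk from $\ell_{i'}$ to $i$ that remains in ${\mathcal V}_1\cup{\mathcal V}_p$, so such interference carries the same negative sign and hence \emph{reinforces} the own term; the delicate point is to rule out, or to render higher-order, the minimum walks from $\ell_{i'}$ to $i$ that detour through a third cluster.

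The remaining work is to turn this order/sign dichotomy into a consistent global choice of the scales $c_i$ (for instance $c_i=\mu^{-a_i}$), so that at every follower coordinate the positive own term strictly dominates all wrong-signed cross terms once $\mu$ is large. This amounts to verifying the feasibility of the finite system of difference constraints $a_i+d(\ell_i,i)<a_{i'}+d(\ell_{i'},i)$ induced by the distances, which is where I expect the bulk of the technical bookkeeping — and, presumably, an auxiliary lemma from the Appendix — to be required. Once such $c_i$ and $\mu$ are fixed and $w\gg 0$ on the followers, adding $\sum_{\ell\in{\mathcal V}_1}\beta_\ell{\bf e}_\ell$ with large $\beta_\ell$ completes the construction of a strictly positive vector in ${\rm Im}({\mathcal R}(A,B))$, and herdability follows from \cite{Ruf_arxiv}.
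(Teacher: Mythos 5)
Your key geometric observation is correct and is the same one the paper uses: any walk of length $d(\ell_i,i)$ issued from $\ell_i$ cannot touch a third cluster (its nodes would otherwise be at distance $\le d(\ell_i,i)$ from $\ell_i$, contradicting the hypothesis), so it lives in ${\mathcal V}_1\cup{\mathcal V}_p$, where the sign pattern is structurally balanced, and hence carries a definite sign. But the way you assemble the positive vector leaves a genuine gap that you yourself flag: by working with the geometric sums $v_\ell(\mu)=\sum_t\mu^{-t}A^t{\bf e}_\ell$ you bring in all powers $A^t{\bf e}_{\ell_i}$ with $t>d(\ell_i,i)$, whose entries at nodes of third clusters (and, for same-cluster pairs $i,i'$, the minimum walks from $\ell_{i'}$ to $i$ that detour through a third cluster) have uncontrolled signs. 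The resulting cross-term bookkeeping --- the feasibility of the difference constraints $a_i+d(\ell_i,i)<a_{i'}+d(\ell_{i'},i)$ and the treatment of bad-signed same-cluster interference --- is never carried out, and it is not clear it can be, since the hypothesis bounds $d(\ell_{i'},i')$ against third-cluster distances but says nothing about $d(\ell_{i'},i)$ for $i\ne i'$ in the same cluster.

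The missing idea is that you do not need to mix powers of $A$ at all: the single column $A^{k_i}B{\bf e}_{\ell_i}$ of the controllability matrix, with $k_i=d(\ell_i,i)$, already lies in ${\rm Im}({\mathcal R}(A,B))$, and your confinement argument applies to \emph{every} walk of length exactly $k_i$ from $\ell_i$, not only to the minimum walks reaching $i$. Indeed $[A^{k_i}]_{j\ell_i}\ne 0$ forces $d(\ell_i,j)\le k_i$, hence $j\in{\mathcal V}_1\cup{\mathcal V}_p$, and the whole walk stays in ${\mathcal V}_1\cup{\mathcal V}_p$; therefore the entire column restricted to the follower coordinates is unisigned (zero outside ${\mathcal V}_p$, nonpositive on ${\mathcal V}_p$, strictly negative at $i$). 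Summing $-A^{k_i}B{\bf e}_{\ell_i}$ over all followers $i$ then gives a vector that is nonnegative on the follower block with every follower coordinate covered --- no cancellation is possible and no asymptotic scales are needed --- and the leader coordinates are fixed up exactly as in your first reduction (the paper packages this as Lemma~\ref{lemmaB} applied to the block-triangular form of ${\mathcal R}$). With that replacement your argument closes and coincides with the paper's proof.
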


\begin{proof}  
It entails no loss of generality assuming that 
${\mathcal L} = {\mathcal V}_1 = [1,m]$. Therefore
 $B = \begin{bmatrix} I_{m}\cr 0\end{bmatrix}$ and the reachability matrix takes the form
 $${\mathcal R} = \begin{bmatrix} B &\vline & AB & A^2B & \dots & A^{n-1} B \end{bmatrix} =\begin{bmatrix} I_m &\vline& \Phi_{12}\cr 0 &\vline& \Phi_{22}\end{bmatrix}.$$

Under the statement assumptions, for every 
$p\in [2,k]$ and  $\forall i \in {\mathcal V}_p$ there exists $\ell_i \in {\mathcal L}={\mathcal V}_1=[1,m]$ and $k_i> 0$ (the distance from $\ell_i$ to $i$) such that 
$[A^{k_i} B]_{i,\ell_i}\ne 0$ and if 
$[A^{k_i} B]_{j,\ell_i}\ne 0$ for some $j\ne i$ then either $j \in {\mathcal L}$ or $j \in {\mathcal V}_p$.
Clearly, the vector $A^{k_i} B {\bf e}_{\ell_i}$ represents the $(mk_i + \ell_i)$-th column of ${\mathcal R}$, and its restriction to the last $n-m$ entries 
is the $(mk_i - m + \ell_i)$-th column of $\Phi_{22}$.
We want to prove that  such a restriction is a unisigned vector.
To this end, we first observe that assumption $d(\ell_i, i) < d(\ell_i, j),   \forall j \in \cup_{h\not\in \{1,p\}} {\mathcal V}_h$, implies that the walk of length $k_i$ from $\ell_i$ to $i$ cannot pass through any cluster ${\mathcal V}_h, h \not\in \{1, p\},$ therefore the nodes belonging to the walk either belong to ${\mathcal V}_1$ or to ${\mathcal V}_p$.
This immediately implies that if
$[A^{k_i} B]_{j,\ell_i}\ne 0$ for some $j \in {\mathcal L}={\mathcal V}_1$ then $[A^{k_i} B]_{j,\ell_i}> 0$, while if 
$[A^{k_i} B]_{j,\ell_i}\ne 0$ for some $j \in {\mathcal V}_p$ then   $[A^{k_i} B]_{j,\ell_i} <0$.
This implies that the $(mk_i - m + \ell_i)$-th column of $\Phi_{22}$
 is a unisigned vector.
On the other hand, since $\cup_{i\in [2,k]} \overline{\rm ZP}(A^{k_i} B {\bf e}_{\ell_i}) \supseteq [m+1,n]$, this means that ${\rm Im}(\Phi_{22})$ includes a strictly positive vector.
Therefore, by Lemma \ref{lemmaB} in the Appendix (see also Remark \ref{remo_romolo}), ${\rm Im}({\mathcal R})$ includes a strictly positive vector, and hence the pair $(A,B)$ is herdable.
 $\square$
\end{proof}
 \smallskip


Proposition \ref{splittedleaders} considers the case when the directed graph ${\mathcal G}(A)$ is structurally balanced and there are leaders 
in both classes.

 \smallskip
 
\begin{proposition}\label{splittedleaders}    
 Assume that 
the directed graph ${\mathcal G}(A)$ is   structurally balanced, with nodes split into clusters  ${\mathcal V}_1$  and ${\mathcal V}_2$, and that the set of leaders ${\mathcal L}$ intersects both ${\mathcal V}_1$ and ${\mathcal V}_2$.
If  
\begin{itemize}
\item[a)] $\forall i \in {\mathcal V}_1\setminus {\mathcal L}$ there exists $\ell \in {\mathcal L} \cap {\mathcal V}_1$ such that
$d(\ell, i) < d(\ell, j),   \forall j \in {\mathcal V}_2 \setminus {\mathcal L};$
\item[b)] $\forall i \in {\mathcal V}_2\setminus {\mathcal L}$ there exists $\ell \in {\mathcal L} \cap {\mathcal V}_2$ such that
$
d(\ell, i) < d(\ell, j),   \forall j \in {\mathcal V}_1  \setminus {\mathcal L};
$
\end{itemize}
then the pair $(A,B)$ is herdable.
\end{proposition}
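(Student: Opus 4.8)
The plan is to mirror the argument used for Proposition \ref{kblocchi}, the only novelty being that now the distinguished columns of the reachability matrix will turn out to be \emph{nonnegative} (rather than negative) on the relevant follower rows. As there, I would first relabel the nodes so that the $m$ leaders come first, writing $B = \begin{bmatrix} I_m \cr 0 \end{bmatrix}$ and ${\mathcal R} = \begin{bmatrix} I_m & \Phi_{12} \cr 0 & \Phi_{22}\end{bmatrix}$, and then reduce the goal, via Lemma \ref{lemmaB}, to exhibiting a strictly positive vector in ${\rm Im}(\Phi_{22})$, i.e.\ in the span of the follower-row restrictions of the columns $A^k B {\bf e}_\ell$.

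The key structural fact I would establish first is the sign rule for structurally balanced graphs: setting $\sigma(v)=+1$ for $v\in{\mathcal V}_1$ and $\sigma(v)=-1$ for $v\in{\mathcal V}_2$, every edge $(j,i)$ satisfies ${\rm sign}([A]_{ij})=\sigma(i)\sigma(j)$, so every walk of length $k$ from $\ell$ to $i$ carries the same sign $\sigma(\ell)\sigma(i)$. Consequently $[A^k]_{i\ell}$, being a sum of terms all of this common sign, is itself of sign $\sigma(\ell)\sigma(i)$ whenever it is nonzero (equivalently, $SAS\ge 0$ for $S={\rm diag}(\sigma(1),\dots,\sigma(n))$). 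This is the mechanism that will force the candidate columns to be unisigned.

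Next I would construct the columns. For a follower $i\in{\mathcal V}_1\setminus{\mathcal L}$, hypothesis (a) supplies a same-cluster leader $\ell_i\in{\mathcal L}\cap{\mathcal V}_1$ with $k_i:=d(\ell_i,i)<d(\ell_i,j)$ for all $j\in{\mathcal V}_2\setminus{\mathcal L}$. I would then inspect the follower-row restriction of $A^{k_i}B{\bf e}_{\ell_i}$: the strict distance inequality forces $[A^{k_i}]_{j\ell_i}=0$ for every $j\in{\mathcal V}_2\setminus{\mathcal L}$ (no walk of length $k_i$ can reach them), while for $j\in{\mathcal V}_1\setminus{\mathcal L}$ the sign rule gives $[A^{k_i}]_{j\ell_i}\ge 0$, with a strict inequality at $j=i$. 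Hence this column of $\Phi_{22}$ is nonnegative and positive in position $i$. The symmetric use of hypothesis (b) yields, for each $i\in{\mathcal V}_2\setminus{\mathcal L}$, a nonnegative column of $\Phi_{22}$ that is positive in position $i$ and vanishes on ${\mathcal V}_1\setminus{\mathcal L}$.

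Finally, since every follower row is hit positively by its own column while all these columns are nonnegative, their sum is a strictly positive vector in ${\rm Im}(\Phi_{22})$; Lemma \ref{lemmaB} (cf.\ Remark \ref{remo_romolo}) then lifts it to a strictly positive vector in ${\rm Im}({\mathcal R})$, and herdability follows from the controllability-matrix characterisation of herdability. I expect the main obstacle to be the sign bookkeeping: one must argue carefully that, thanks to structural balance, the within-cluster follower entries of each chosen column cannot change sign, and that the strict inequalities in (a)--(b) are exactly what is needed to annihilate the opposite-cluster follower entries. The leader rows, which need not be controlled at all, fall harmlessly outside $\Phi_{22}$, so the presence of leaders in both clusters causes no additional difficulty.
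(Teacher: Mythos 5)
Your proposal is correct and follows essentially the same route as the paper's own proof: the same sign rule for structurally balanced graphs (within-cluster walks are positive), the same use of the strict distance inequalities to annihilate the opposite-cluster follower entries of the columns $A^{k_i}B{\bf e}_{\ell_i}$, and the same conclusion via the block-triangular form of ${\mathcal R}$ and Lemma \ref{lemmaB}. Your explicit summation of the nonnegative columns is just the content of Remark \ref{remo_romolo}, so nothing is missing.
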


\begin{proof} 
 First of all,
we observe that under the structural balance assumption if two nodes (leader or follower) belong to the same class, every     walk (and hence, in particular, every minimum walk) that connects them has a positive weight. As a result, if $i,j \in {\mathcal V}_p$ for some $p\in [1,2]$ and $[A^kB]_{ij}\ne 0$ for some $k >0$, then 
$[A^kB]_{ij} > 0$.\\
Condition a)  ensures that for every $i \in {\mathcal V}_1\setminus {\mathcal L}$ there exists $\ell \in {\mathcal L} \cap {\mathcal V}_1$ and $k_i>0$ such that $[A^{k_i}B]_{i\ell}\ne 0$ and hence $[A^{k_i}B]_{i\ell}> 0$. On the other hand, 
$[A^{k_i}B]_{j\ell}= 0$ for every $j\in {\mathcal V}_2\setminus {\mathcal L}$. Therefore if $[A^{k_i}B]_{j\ell}\ne 0$  and $j\not\in {\mathcal L}$ then 
$[A^{k_i}B]_{j\ell}> 0.$
Consequently, for every $i \in {\mathcal V}_1\setminus {\mathcal L}$ there exists $\ell\in {\mathcal V}_1\setminus {\mathcal L}$ and $k_i>0$ such that 
the vector
$A^{k_i}B{\bf e}_\ell$ has the $i$-th entry which is nonzero and 
its restriction   to the entries that correspond to the followers {\color{blue} (i.e., with indices in $[1,n]\setminus{\mathcal L}$)} is a unisigned vector.
By exploiting b), we can claim the same result for all indices  $ i \in {\mathcal V}_2\setminus {\mathcal L}$. So, keeping in mind the structure of $B$, we can claim 
that there exists a permutation matrix $P$ and a selection matrix $S$ such that
$$P {\mathcal R}(A,B) S = \begin{bmatrix} I_m & \Phi_{12}\cr 0 & \Phi_{22}\end{bmatrix},$$
where all columns of $\Phi_{22}$ are unisigned (in fact, nonnegative) and $\Phi_{22}$ has no zero rows.
By Lemma \ref{lemmaB},   in the Appendix, 
we can claim the herdability of the pair $(A,B)$.
 $\square$\end{proof}
\smallskip

\begin{remark}  
It is worth noticing that conditions a) and b) in Proposition \ref{splittedleaders}  amount to requiring that for each follower there is a
   leader in the same cluster  that is closer to that follower   than to any other follower  belonging to  the other cluster, something reasonable to assume when dealing with social networks.
\end{remark}

\section{Herdability of pairs $(A,B)$ with ${\mathcal G}(A)$ an undirected tree with a single leader}\label{s5}

We now consider the case when $B$ is a canonical vector 
and the matrix $A$ is a symmetric real matrix whose associated undirected graph ${\mathcal G}(A)$ is  
a {\em tree}
\cite{DeoGraph}.
Undirected graphs are reasonably common when describing social networks, due to the fact that
in the long term the friendly/antagonistic attitude that an individual has toward another one tends to be  reciprocated.
Also, it has been shown that moderate size scale social graphs exhibit non-trivial tree-like structures and tree-like decomposition properties \cite{treelike1,treelike2}.
In fact, trees
are typically used  to represent social networks that exhibit  a multi-layer organisation (for instance, employees in a company,  members of a sport association...). 

  When an undirected tree represents a social network topology, it makes sense to assume, as in the previous section, that the leader of the network
is the individual who is   subject to a direct external influence.
All the other $n-1$ members of the network will be referred to as followers.  
This case has been investigated in \cite{She_herd}, where a sufficient condition for the herdability   of such  pairs $(A,B)$ has been provided. 
In this section we provide a  sufficient condition for herdability that is less restrictive, and in the case of trees whose followers have 
distance at most $2$ from the leader we provide necessary and sufficient conditions.

 To investigate the problem we 
adopt 
the following  non restrictive

\noindent {\bf Assumption}: The graph ${\mathcal G}(A)$ is a signed, weighted, connected and acyclic undirected graph, namely a tree \cite{DeoGraph}.
  Let us assume $B={\bf e}_1$, and hence the leader is ${\mathcal L}=\{1\}$,  while the followers   split into classes, based on their   distance from the leader. The followers at distance $1$ from the leader are ${\mathcal F}_1= [2,m_1+1]$, the followers at distance $2$ from the leader are 
${\mathcal F}_2= [m_1 +2,m_1+m_2+ 1]$, and so on till the last class ${\mathcal F}_{k}= [m_1 + \dots + m_{k-1}+2,n]$, where $k$ is the maximum distance between the leader and one of its followers. 
\medskip


\begin{proposition}\label{p1} 
Consider a pair $(A,B)$, with $A\in \mathbb{R}^{n\times n}$ and $B\in {\mathbb R}^n$ satisfying the previous Assumption.\\
If, for every $d\in [0,k-1]$, all the edges from the vertices in ${\mathcal F}_d$ to the vertices in ${\mathcal F}_{d+1}$ have the same sign, 
then the pair $(A,B)$ is herdable.
\end{proposition}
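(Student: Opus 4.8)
The plan is to invoke the characterisation of herdability recalled above (Corollary~1 of \cite{Ruf_arxiv}): it suffices to exhibit a strictly positive vector lying in ${\rm Im}({\mathcal R}(A,B))$. Since $B={\bf e}_1$, the columns of ${\mathcal R}(A,B)$ are the vectors ${\bf v}_d:=A^d{\bf e}_1$, $d\in[0,n-1]$, and $[{\bf v}_d]_i$ equals the sum, over all walks of length $d$ from the leader $1$ to node $i$, of the products of the corresponding edge weights. I would set ${\mathcal F}_0:={\mathcal L}=\{1\}$ and, for $d\in[0,k-1]$, denote by $\sigma_d\in\{+1,-1\}$ the common sign of the edges joining ${\mathcal F}_d$ to ${\mathcal F}_{d+1}$ granted by the hypothesis.

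First I would record the sign pattern of the ${\bf v}_d$'s along the distance classes, exploiting the tree structure. For $i\in{\mathcal F}_{d}$ one has $[{\bf v}_{d'}]_i=0$ whenever $d'<d$, because no walk can be shorter than the distance. Moreover, in a tree the only walk of length exactly $d$ from $1$ to a node $i\in{\mathcal F}_d$ is the simple path, which crosses exactly one node of each class ${\mathcal F}_0,{\mathcal F}_1,\dots,{\mathcal F}_d$ and uses exactly one edge between ${\mathcal F}_j$ and ${\mathcal F}_{j+1}$ for every $j\in[0,d-1]$. Hence $[{\bf v}_d]_i$ is a nonzero product of $d$ edge weights whose sign equals $s_d:=\prod_{j=0}^{d-1}\sigma_j$ (with $s_0:=1$), a quantity that depends only on $d$ and not on the chosen node $i\in{\mathcal F}_d$. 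This last point is exactly where the equal-sign hypothesis enters, and it is the structural heart of the argument.

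I would then build the required vector as ${\bf w}:=\sum_{d=0}^{k}s_d\,\eta^{d}\,{\bf v}_d$ for a small parameter $\eta>0$; note ${\bf w}\in{\rm Im}({\mathcal R}(A,B))$ since $k\le n-1$. For $i\in{\mathcal F}_d$ the contributions of ${\bf v}_{d'}$ with $d'<d$ vanish, the ``diagonal'' term $s_d\eta^{d}[{\bf v}_d]_i=\eta^{d}\,|[{\bf v}_d]_i|$ is strictly positive, and the remaining terms come only from $d'>d$; factoring out $\eta^{d}$ gives $[{\bf w}]_i=\eta^{d}\big(|[{\bf v}_d]_i|+O(\eta)\big)$. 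Since $|[{\bf v}_d]_i|>0$ and there are finitely many nodes, a single sufficiently small $\eta>0$ makes every entry of ${\bf w}$ strictly positive, which proves herdability. Equivalently, ordering the rows by distance class displays ${\mathcal R}(A,B)$ in a block upper-triangular form with strictly signed diagonal blocks, so the conclusion can also be phrased through Lemma~\ref{lemmaB}.

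I expect the main obstacle to be the ``spillover'' of higher powers: for a node at distance $d$, the vectors ${\bf v}_{d'}$ with $d'>d$ (arising from backtracking walks in the tree) contribute entries of uncontrolled sign, so positivity of ${\bf w}$ is not automatic from the signed diagonal alone. The hierarchical scaling by $\eta^{d}$ is precisely what neutralises this, forcing each distance class to be dominated by the single power ${\bf v}_d$ that first reaches it. A secondary, more delicate verification is the node-independence of the sign $s_d$ within a class, which would fail without the equal-sign hypothesis and is what makes this condition strictly weaker than the sufficient condition of \cite{She_herd}.
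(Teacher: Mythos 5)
Your proof is correct and follows essentially the same route as the paper: you order the nodes by distance class, observe that the unique shortest walk in a tree makes the $(d,d)$ block of the controllability matrix a zero-free unisigned vector ${\bf v}_d$ with a class-independent sign, and then combine the columns $A^dB$ with hierarchically scaled coefficients to produce a strictly positive vector. The paper packages this last combination step into Lemma~\ref{lemmaB} and Remark~\ref{remo_romolo} (block upper-triangular matrices with positively herdable diagonal blocks), whereas you carry it out explicitly via the $\eta^d$ weighting; the two are the same argument.
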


\begin{proof}
Under the previous assumption, it is easy to see that
every vertex in ${\mathcal F}_d$ is reached for the first time  by the leader in $d$ steps,  $d \in [0,k]$, and subsequently it is reached 
after $d+2h$ steps for every $h\in \{1,2,3,\dots\}$ (since each undirected edge of the graph 
can be crossed back and forth, and hence  condition $[A^dB]_{i}\ne 0$ implies $[A^{d+2} B]_{i}\ne 0$). Therefore the controllability matrix of the pair $(A,B)$ takes the form
\begin{equation}\label{herd_tree}
{\mathcal R} 
= \begin{bmatrix}\
1 & 0                 & *             & 0 & * & \dots \cr
0 & {\bf v}_1      & 0            & * & 0 & \dots \cr
0 & 0                 & {\bf v}_2 & 0 & * & \dots \cr
0 & 0                 & 0 & {\bf v}_3 & 0 &   \dots \cr
\vdots &    \vdots              & \vdots &   &   \vdots&   \vdots \cr
0 & 0                 & \dots  &\dots   & {\bf v}_{k}    &  \dots 
\end{bmatrix},
\end{equation}
where ${\bf v}_d\in {\mathbb R}^{m_d}, d\in [1,k],$ are, by assumption, unisigned, while $*$ denotes (nonzero) vectors/entries whose values are not relevant.
So,  by  Remark \ref{remo_romolo} in the Appendix, we immediately deduce that there exists a strictly positive vector in the image of ${\mathcal R}$, and hence $(A,B)$ is herdable.
 $\square$\end{proof}
\smallskip

\begin{remark}
Theorem 3 in \cite{She_herd}  follows   as a corollary of the previous proposition, since it imposes  that all paths from the leader to the followers in 
${\mathcal V}_o := \cup_{h \in {\mathbb Z}_+} {\mathcal F}_{1+2h}$ have the same sign and, at the same time,  
all paths from the leader to the followers in 
${\mathcal V}_e := \cup_{h \in {\mathbb Z}_+} {\mathcal F}_{2+2h}$ have the same sign. This means that not only all the  edges from vertices in ${\mathcal F}_d$ to 
vertices in ${\mathcal F}_{d+1}, d\in [0,k-1]$,  (where ${\mathcal F}_0:={\mathcal L}$) have the same signs, but such signs are uniquely determined for $d\ge 1$ once we choose the signs of the edges from ${\mathcal F}_1$ to  
${\mathcal F}_2$.
\end{remark}
\smallskip

\begin{example} \label{ex1}
Consider a pair $(A,B)$, with  $A=A^\top\in {\mathbb R}^{9\times 9}$ and $B = {\bf e}_1$, and assume that  the   undirected graph ${\mathcal G}(A)$ associated with the matrix $A$ is a tree    whose structure and   edge signs are described in Figure \ref{grafoEx1}.  
\begin{figure}[h!]  
\includegraphics[scale=0.7]{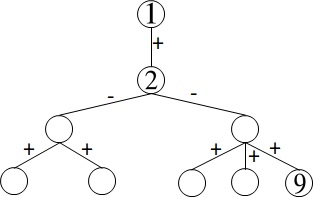}
\centering
\caption{  Tree structure of the  herdable system of Example \ref{ex1}.} \label{grafoEx1}
\end{figure}
 The nodes $i=2$ and $j=9$ both belong to ${\mathcal V}_o$, since both of them are reached from the leader (node $1$ in Fig. \ref{grafoEx1}) in an odd number of steps ($1+2h$ and $3+2h$, $h \in \{0,1,2, \dots\}$, respectively). The node $i$ is reached by the leader with positive walks, while $j$ with negative ones, so the hypotheses of Theorem 3 in \cite{She_herd} are violated. However, the controllability matrix of the pair takes the structure in \eqref{herd_tree} for $k=3$, with unisigned vectors 
 ${\bf v}_1$, ${\bf v}_2$ and ${\bf v}_3$, the first one with a positive entry, while the other two with negative entries,
 thus the pair is herdable by Proposition \ref{p1}.
\end{example}
\smallskip

\begin{remark} As previously remarked, the choice of the leader is not intrinsic 
to the structure of the tree, but is just the specific node to which we   apply the input.
In general, herdability 
is achieved only
when selecting  as leaders certain nodes,   rather than others, as it is related to the  path signs  that connect the leader to the other nodes.
\end{remark}
\smallskip

 Propositions \ref{p2} and \ref{p3}, below, provide complete characterizations of herdability for trees in which followers have all distance $1$ from the leader or distance at most $2$ from the leader, respectively.
\smallskip


\begin{proposition}\label{p2} 
Consider a pair $(A,B)$, with $A\in \mathbb{R}^{n\times n}$ and $B\in {\mathbb R}^n$ satisfying  the aforementioned Assumption, and suppose that all the followers have distance one from the leader.\\
Then the pair $(A,B)$ is herdable if and only if all the edges have the same sign.
\end{proposition}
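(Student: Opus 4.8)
The plan is to exploit the rigidity of the distance-one hypothesis: under the Assumption, a tree all of whose followers sit at distance one from node~$1$ must be a \emph{star} centred at the leader, which forces $A$ into a bordered form whose reachable subspace can be written down explicitly.

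First I would fix the structure of $A$. Connectedness together with the distance-one condition makes node~$1$ adjacent to each of the $n-1$ followers; any edge between two followers $i,j\ge 2$ would create a cycle through the nodes $1$, $i$ and $j$, contradicting acyclicity, and for the same reason acyclicity rules out self-loops, so the diagonal of $A$ vanishes. Collecting the (nonzero) edge weights in ${\bf a}=[\,a_2\ \dots\ a_n\,]^\top$, with $a_i=[A]_{i1}$, the matrix becomes
\[
A=\begin{bmatrix} 0 & {\bf a}^\top \cr {\bf a} & 0\end{bmatrix},
\qquad B={\bf e}_1 .
\]

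Next I would compute the image of the controllability matrix. A direct calculation gives $A{\bf e}_1=\begin{bmatrix}0\cr {\bf a}\end{bmatrix}$ and $A^2{\bf e}_1=({\bf a}^\top{\bf a})\,{\bf e}_1$, so, setting $c:={\bf a}^\top{\bf a}>0$, an easy induction yields $A^{2h}{\bf e}_1=c^{h}{\bf e}_1$ and $A^{2h+1}{\bf e}_1=c^{h}\begin{bmatrix}0\cr {\bf a}\end{bmatrix}$ for all $h\ge 0$. Hence every column of ${\mathcal R}(A,B)$ is a scalar multiple of ${\bf e}_1$ or of $\begin{bmatrix}0\cr {\bf a}\end{bmatrix}$, and since ${\bf a}\neq 0$ we obtain the exact, two-dimensional description ${\rm Im}({\mathcal R}(A,B))={\rm span}\{{\bf e}_1,\,\begin{bmatrix}0\cr {\bf a}\end{bmatrix}\}$.

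Finally I would apply the herdability characterisation recalled in Section~\ref{s2}. A generic element of this image is $\begin{bmatrix}\alpha\cr \beta{\bf a}\end{bmatrix}$, whose entries are $\alpha,\beta a_2,\dots,\beta a_n$; it is strictly positive exactly when $\alpha>0$ and $\beta a_i>0$ for every $i\in[2,n]$. A single $\beta$ can satisfy the latter iff the $a_i$ all carry the same sign, i.e.\ iff all edges have the same sign. Thus ${\rm Im}({\mathcal R}(A,B))$ contains a strictly positive vector precisely in that case, giving both implications at once. The ``if'' part is in fact the $k=1$ instance of Proposition~\ref{p1}, so the only real content is the ``only if'' part; I expect the main (and rather minor) obstacle to be the careful justification of the star structure and of the even/odd power pattern, after which the conclusion is immediate.
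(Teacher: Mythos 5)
Your proof is correct and takes essentially the same route as the paper: the paper invokes Proposition \ref{ablocchi_new} to reduce herdability of $(A,B)$ to that of the pair $({\bf 0}_{(n-1)\times(n-1)}, A_{21})$, which holds iff $A_{21}$ is unisigned, while you reach the identical conclusion by computing ${\rm Im}({\mathcal R}(A,B))$ directly as the span of ${\bf e}_1$ and of the vector whose leader entry is zero and whose follower entries are the edge weights ${\bf a}$. The only difference is that your argument is self-contained (the alternating even/odd power pattern done by hand) where the paper cites the appendix reduction lemma; both rest on the same observation that the tree is a star centred at the leader.
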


\begin{proof}
If all the followers have distance $1$ from the leader, namely $k=1$, then
$$A = \begin{bmatrix} 0 & A_{12}\cr A_{21} & {\bf 0}_{(n-1)\times (n-1)}\end{bmatrix},$$
where $A_{21} = A_{12}^\top \in {\mathbb R}^{n-1}$ is  devoid of zero entries.
By   Proposition \ref{ablocchi_new}, $(A,B)$ is herdable if and only if the pair $({\bf 0}_{(n-1)\times (n-1)}, A_{21})$ is herdable, and this is the case if and only if 
$A_{21}$ is unisigned.
 $\square$\end{proof}
\smallskip

\begin{proposition}\label{p3} 
Consider a pair $(A,B)$, with $A\in \mathbb{R}^{n\times n}$ and $B\in {\mathbb R}^n$ satisfying  the aforementioned
  Assumption, and suppose that all the followers have distance at most $2$ from the leader, and hence
$$A = \begin{bmatrix} 0 & A_{12} & {\bf 0}_{1\times m_2}\cr A_{21} & {\bf 0}_{m_1\times m_1} & A_{23}\cr
{\bf 0}_{m_2\times 1} & A_{32} & {\bf 0}_{m_2 \times m_2}\end{bmatrix},$$
where $A_{21} = A_{12}^\top\in {\mathbb R}^{m_1}$ and $A_{32}= A_{23}^\top \in {\mathbb R}^{m_2\times m_1}$. 
 Then the pair $(A,B)$ is herdable if and only if 
for every $i,j\in {\mathcal F}_1= [2, m_1+1]$ (including $i=j$)\footnote{Note that for $i=j$  condition i) becomes trivial, while condition ii) becomes ``$A_{32} {\bf e}_i$ is  either zero or unisigned".} such that 
\be
 [A_{23}A_{32}]_{ii}= 
  [A_{23}A_{32}]_{jj},
\label{noncompatta}
\ee
we have:
\begin{itemize}
\item[i)] $[A_{21}]_i \cdot [A_{21}]_j >0$ (namely 
 the two edges from the leader ${\mathcal L}$ to  $i$ and $j$ have the same sign);
 \item[ii)] $A_{32} ({\bf e}_i + {\bf e}_j)$ is either zero or unisigned (namely all edges from $i$ and $j$ to their followers in ${\mathcal F}_2$ have the same sign).
 \end{itemize}
\end{proposition}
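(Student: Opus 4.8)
The plan is to write the controllability matrix $\mathcal{R}(A,B)$ explicitly, use the tree structure to split the herdability test into two independent positivity requirements, and translate each of them into the sign conditions i) and ii).

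First I would iterate $A^r B$ starting from $B={\bf e}_1$. Because of the three--block ``tridiagonal'' form of $A$, an easy induction shows that the supports alternate: the odd powers $A^{2s+1}B$ are supported entirely on ${\mathcal F}_1$, whereas $B$ and the even powers $A^{2s}B$ ($s\ge 1$) are supported on $\{1\}\cup{\mathcal F}_2$. Setting ${\bf u}_s:=A^{2s+1}B|_{{\mathcal F}_1}$, the recursion gives ${\bf u}_0=A_{21}$ and ${\bf u}_{s+1}=(A_{21}A_{21}^\top+A_{23}A_{32}){\bf u}_s$, so that ${\bf u}_s=P^sA_{21}$ with $P:=A_{21}A_{21}^\top+A_{23}A_{32}$; moreover $A^{2s}B|_{{\mathcal F}_2}=A_{32}{\bf u}_{s-1}$. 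Since the leader coordinate can always be corrected using the column $B$ alone, which touches no follower, the criterion recalled in Section~\ref{s2} (namely that ${\rm Im}(\mathcal{R})$ must contain a strictly positive vector) reduces to two requirements on \emph{disjoint} sets of columns, hence independently satisfiable: (O) $\mathrm{span}\{P^sA_{21}\}_{s\ge 0}$ contains a strictly positive vector of ${\mathbb R}^{m_1}$, and (E) $A_{32}\,\mathrm{span}\{P^sA_{21}\}_{s\ge 0}$ contains a strictly positive vector of ${\mathbb R}^{m_2}$. Thus $(A,B)$ is herdable iff (O) and (E) both hold.

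The decisive simplification is the tree hypothesis: each vertex of ${\mathcal F}_2$ has a unique neighbour in ${\mathcal F}_1$, for otherwise a cycle would appear. Hence $A_{32}$ has exactly one nonzero entry per row, and $A_{23}A_{32}=A_{32}^\top A_{32}=:D$ is diagonal and nonnegative, with $[D]_{ii}=\|A_{32}{\bf e}_i\|^2=[A_{23}A_{32}]_{ii}$. I would then use the fact that a Krylov subspace is invariant under a rank--one perturbation of the matrix along the seed vector: both $\mathrm{span}\{P^sA_{21}\}$ and $\mathrm{span}\{D^sA_{21}\}$ are simultaneously $P$-- and $D$--invariant and contain $A_{21}$, so they coincide. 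For the diagonal $D$ this span is transparent: grouping the indices by the common value of $[D]_{ii}$ into blocks $I_t=\{i:[D]_{ii}=\delta_t\}$, and using that $A_{21}$ has no zero entry (every leader--to--${\mathcal F}_1$ edge exists in the tree), a Vandermonde argument shows the span consists exactly of the vectors obtained by scaling each block $A_{21}|_{I_t}$ by an arbitrary, independent scalar $c_t$.

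From this description the two conditions become combinatorial. Condition (O) holds iff each block can be made positive by a single scalar, i.e.\ iff within every block the entries of $A_{21}$ share one sign; this is precisely $[A_{21}]_i[A_{21}]_j>0$ whenever $[A_{23}A_{32}]_{ii}=[A_{23}A_{32}]_{jj}$, that is, condition i). For (E), since $A_{32}$ has a single nonzero per row, the $\ell$--th entry of $A_{32}w$ equals $[A]_{\ell,\pi(\ell)}\,c_{t(\pi(\ell))}\,[A_{21}]_{\pi(\ell)}$, where $\pi(\ell)\in{\mathcal F}_1$ is the parent of $\ell$; forcing all these to be positive requires that, for each $i$, all edges from $i$ to its ${\mathcal F}_2$--children share one sign, and that for $i,j$ in the same block these common signs agree once i) is in force. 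This is exactly ii), that $A_{32}({\bf e}_i+{\bf e}_j)$ be zero or unisigned. Assembling the pieces yields herdability $\iff$ (O)$\wedge$(E) $\iff$ i)$\wedge$ii). The step I expect to be the main obstacle is this final bookkeeping: condition i) is what converts the sign constraint coming out of (E) (which a priori entangles the signs of $A_{21}$ with those of the child--edges) into the clean statement ii), so the necessity and sufficiency directions must be threaded together rather than argued in isolation.
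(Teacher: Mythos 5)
Your argument is correct and, at its core, follows the same strategy as the paper's proof: split the columns of $\mathcal{R}(A,B)$ by the parity of the power of $A$ (odd columns supported on ${\mathcal F}_1$, even ones on $\{1\}\cup{\mathcal F}_2$, the leader coordinate being freely adjustable via the column $B$ itself), use the tree structure to conclude that each row of $A_{32}$ is monomial so that $A_{23}A_{32}=A_{32}^\top A_{32}$ is diagonal and nonnegative, describe the relevant Krylov span through a Vandermonde factorization over the blocks of equal diagonal entries (this is precisely the content of Lemma \ref{vandermonde}), and then carry out the sign bookkeeping — including the correct observation, also made in the paper, that requirement (E) is equivalent to condition ii) only once condition i) is already in force, so the two conditions must be established in that order. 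The one point where you genuinely depart from the paper is the elimination of the leader: the paper invokes its general reduction result, Proposition \ref{ablocchi_new}, proved by an explicit induction showing that the follower rows of $\begin{bmatrix} AB & A^2B & \cdots\end{bmatrix}$ and the controllability matrix of $(A_{22},A_{21})$ have the same image, which lands directly on the iterates $(A_{23}A_{32})^sA_{21}$; you instead keep the full recursion, obtain ${\bf u}_s=P^sA_{21}$ with $P=A_{21}A_{21}^\top+A_{23}A_{32}$, and remove the rank-one term by noting that the Krylov subspaces generated by $A_{21}$ under $P$ and under $D=A_{23}A_{32}$ coincide (each is invariant under the other operator and contains the common seed $A_{21}$). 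Your rank-one invariance argument is shorter and self-contained for this particular block structure, whereas the paper's Proposition \ref{ablocchi_new} is more general (any full-row-rank $B_1$) and is reused elsewhere, e.g.\ in Proposition \ref{p2}. I see no gaps.
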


\begin{proof} 
First of all, we highlight that, by Assumption, $\Gamma := A_{21}$ is devoid of zero entries, 
and for every $i\in [1,m_2]$ the row vector ${\bf e}_i^\top A_{32}$ is a monomial vector (namely it has a single nonzero entry). Consequently, 
$\Lambda:= A_{23}A_{32}$ is a diagonal matrix (with nonnegative diagonal entries).

\noindent By   Proposition \ref{ablocchi_new} in the Appendix, $(A,B)$ is herdable if and only if the pair 
$$\left(\begin{bmatrix} {\bf 0}_{m_1\times m_1} & A_{23}\cr
A_{32} & {\bf 0}_{m_2\times m_2}\end{bmatrix}, \begin{bmatrix}A_{21}\cr {\bf 0}_{m_2}\end{bmatrix}\right)$$ 
is herdable, and this is the case if and only if the image of 
the controllability matrix  $\hat {\mathcal R}$ of the previous pair, see \eqref{reachhat},
\begin{figure*}
\be
\hat {\mathcal R} :=
\begin{bmatrix}
A_{21} & 0 & (A_{23} A_{32}) A_{21} & 0 & (A_{23} A_{32})^2 A_{21} & 0 & \dots \cr
0 & A_{32} A_{21} & 0 & A_{32} (A_{23} A_{32}) A_{21} & 0 & A_{32} (A_{23} A_{32})^2 A_{21}  & \dots
\end{bmatrix}
\label{reachhat}
\ee
 \begin{center}
-------------------------------------------------------------------------------------------------------------------------------------------------
\end{center}
\end{figure*}
\noindent includes a strictly positive vector.
This is the case if and only if the following   conditions simultaneously hold:
\begin{itemize}
\item[a)] the image of the controllability matrix
${\mathcal R}_1 :=
\begin{bmatrix}
A_{21} &   (A_{23} A_{32}) A_{21}   & (A_{23} A_{32})^2 A_{21}   & \dots \end{bmatrix}$ includes a strictly positive vector, namely the pair 
$(A_{23} A_{32}, A_{21})$ is herdable;
\item[b)] 
the image of the matrix
 $A_{32}  {\mathcal R}_1$
includes a strictly positive vector.
\end{itemize}
As the matrix $\Lambda = A_{23} A_{32}$ 
 is  diagonal, while the column vector $\Gamma = A_{21}$ has no zero entries,
 by Lemma
\ref{vandermonde}, the pair $(\Lambda,\Gamma)=(A_{23} A_{32}, A_{21})$ is herdable if and only if 
condition
 \eqref{noncompatta}
 implies $[A_{21}]_i \cdot [A_{21}]_j >0$. This means that a) is equivalent to condition i).
\\
 So, we are now remained with proving that if i) (equivalently, a)) holds, then b) and ii) are equivalent.
 If i) holds,  by referring to the proof of Lemma \ref{vandermonde} in the Appendix, we can assume without loss of generality that $\Lambda$ and $\Gamma$ take the form given in \eqref{lambdagamma} and 
 claim that 
 $${\rm Im} \left(A_{32}  {\mathcal R}_1\right) =
 {\rm Im} \left(A_{32} \cdot {\rm diag}
\{{\bm \gamma}_1, \dots,  {\bm \gamma}_p, {\bm \gamma}_{p+1}, \dots, {\bm \gamma}_s\}\right),$$
where ${\bm \gamma}_i\in {\mathbb R}^{n_i}$ is strictly positive if $i\in [1,p]$ and strictly negative if $i\in [p+1,s]$.\\
Set
$W = 
\begin{bmatrix}
 {\bf w}_1 \,| \, \dots \,|\, {\bf w}_p  \,|\, {\bf w}_{p+1} \,|\, \dots  \,|\, {\bf w}_s
\end{bmatrix} := A_{32} \cdot {\rm diag}
\{{\bm \gamma}_1, \dots,  {\bm \gamma}_p, {\bm \gamma}_{p+1}, \dots, {\bm \gamma}_s\},
$
where
each vector ${\bf w}_i$ is obtained 
by combining with the coefficients  of the vector ${\bm \gamma}_i$ (having all the same sign) the columns of $A_{32}$ of indices
$[h_i+1, h_i+n_i]$, where by definition $h_1:=0$, while
 $h_i := n_1 +n_2 +\dots + n_{i-1}$ for $i\in [2,s]$.\\
 We observe that  
 all  columns of $A_{32}$ are either zero (if a vertex in ${\mathcal F}_1= [2, m_1+1]$ has no followers) or have disjoint nonzero patterns, meaning that for every $\ell, m\in [h_i+1, h_i+n_i], \ell\ne m,$
 $\overline{\rm ZP}(A_{32} {\bf e}_\ell) \cap \overline{\rm ZP}(A_{32} {\bf e}_m) =\emptyset$.
 As a result also the columns ${\bf w}_i$ of $W$ are either zero or have disjoint nonzero patterns. \\
We can now conclude that condition b) holds if and only if   ${\rm Im} \left(A_{32}  {\mathcal R}_1\right) =
 {\rm Im} (W)$  contains a strictly positive 
 vector, but this is possible if and only if all vectors ${\bf w}_h$ are unisigned.
 By the way the vectors ${\bf w}_h$ have been obtained, this is possible if and only if condition ii) holds.
 $\square$\end{proof}
\smallskip

Proposition \ref{p3} states that if ${\mathcal G}(A)$ is a tree and the distance from the unique leader is at most $2$, then herdability
is possible if and only if every time the sum of the squares of all arc  weights from a node $i\in {\mathcal F}_1$   to all the nodes in ${\mathcal F}_2$ 
coincides with the sum of the squares of all arc weights from some node $j \in{\mathcal F}_1$  to all the nodes in ${\mathcal F}_2$,
then 
i) the edges from the leader to $i$ and $j$ must have the same sign; ii) all edges from $i$ and $j$ to the nodes 
in ${\mathcal F}_2$  
must have the same sign.
%
\smallskip

\begin{example}  \label{ex2}
Consider the pair $(A,B)$, with
\begin{eqnarray*}
A &=& 
\begin{bmatrix}
0 & A_{12} &0\cr 
A_{21} & 0 & A_{23}\cr
0 & A_{32} &0 \end{bmatrix} =
\begin{bmatrix}
0 &\vline& 1 & a & 2 &\vline& 0 & 0\cr
\hline
1 &\vline& 0 & 0 & 0 &\vline& 0 & 0\cr
a &\vline& 0 & 0 & 0 &\vline& b & c\cr
2 &\vline& 0 & 0 & 0 &\vline& 0 & 0\cr
\hline
0 &\vline& 0 & b & 0 &\vline& 0 & 0\cr
0 &\vline& 0 & c & 0 &\vline& 0 & 0
\end{bmatrix}, \quad
B = {\bf e}_1,
\end{eqnarray*}
 whose graph is given  in Fig. \ref{grafoEx2},  where $a, b$ and $c$ are nonzero real values. Note that ${\mathcal L}=\{1\}, {\mathcal F}_1=[2,4]$ and ${\mathcal F}_2=[5,6]$, so that $m_1=3$ and $m_2=2$.
\begin{figure}[h!]  
\includegraphics[scale=0.7]{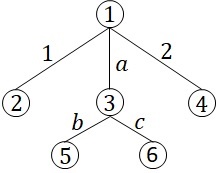}
\centering
\caption{  Graph structure related to the pair $(A,B)$ in Example~\ref{ex2}.} \label{grafoEx2}
\end{figure}
We first check for all indices $i,j\in [1,3], i\ne j,$ whether condition \eqref{noncompatta} holds.
It is easily seen that $[A_{23}A_{32}]_{11}=[A_{23}A_{32}]_{33} = 0.$
In fact, for the pair of indices $(1,3)$ both condition i) 
 and condition ii) of Proposition  \ref{p3} are  satisfied, since $[A_{21}]_1 \cdot [A_{21}]_3 = 2>0$ and both column $1$ and column $3$ of $A_{32}$ are zero.
 On the other hand,  if we assume  $i=j\in [1,3]$, for which   condition \eqref{noncompatta} trivially holds, 
condition i) is straightforward, while condition ii) holds provided that $bc >0$ (namely $b$ and $c$ have the same sign), because all columns of $A_{32}$ are zero
or   unisigned if and only if $bc>0$.
Therefore the pair $(A,B)$ is herdable for every $a\ne 0$ and for $bc>0$. 
\end{example}

\section{Conclusions} \label{s6}

In this paper herdability  of linear time-invariant systems has been investigated.  
First, some 
results for pairs $(A,B)$ 
corresponding to  leader-follower networks ${\mathcal G}(A)$ satisfying clustering balance have been derived.
Subsequently, pairs  $(A,B)$ whose network ${\mathcal G}(A)$ has a   tree topology and a single leader have been addressed.
The study of herdability based on the graph ${\mathcal G}(A)$ seems particularly promising, especially  because  it may lead, for special graph topologies, to determine conditions for ``structural herdability", which is independent of the specific values of the matrix entries, and only relies on their signs.
    Also, herdability property for time varying linear systems will be the subject of future investigation.


\bibliographystyle{plain} 

 \bibliography{Refer165}

\section*{Appendix}

In this Appendix we provide some technical results used within the paper.\medskip

 \begin{lemma} \label{lemmaTinput}
Consider a pair $(A,B)$, where $A\in {\mathbb R}^{n \times n}$ and $B\in {\mathbb R}^{n\times m}$. For every nonsingular matrix $T\in {\mathbb R}^{m\times m}$,
the pair $(A,B)$ is herdable if and only if the pair $(A,BT)$ is herdable.
\end{lemma}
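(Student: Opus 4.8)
The plan is to reduce the statement to the image-based characterization of herdability recalled earlier (Corollary 1 of \cite{Ruf_arxiv}): a pair is herdable exactly when the image of its controllability matrix contains a strictly positive vector. It therefore suffices to show that $(A,B)$ and $(A,BT)$ have controllability matrices with the \emph{same} image, after which the equivalence of herdability is immediate.

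To establish this, I would compute $\mathcal{R}(A,BT)$ blockwise. Since $A^{i}(BT)=(A^{i}B)T$ for every $i$, each $n\times m$ block of $\mathcal{R}(A,BT)$ equals the corresponding block of $\mathcal{R}(A,B)$ multiplied on the right by $T$. Collecting the $n$ blocks, this is precisely right multiplication of $\mathcal{R}(A,B)$ by the block-diagonal matrix $I_n\otimes T$, i.e. the $nm\times nm$ matrix carrying $n$ copies of $T$ on its diagonal. In other words, $\mathcal{R}(A,BT)=\mathcal{R}(A,B)(I_n\otimes T)$.

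Next I would observe that, because $T$ is nonsingular, so is $I_n\otimes T$, its determinant being $(\det T)^{n}\ne 0$. Right multiplication by a nonsingular square matrix only recombines the columns and hence leaves the column space unchanged, so ${\rm Im}(\mathcal{R}(A,BT))={\rm Im}(\mathcal{R}(A,B))$. In particular one image contains a strictly positive vector if and only if the other does, and the claimed equivalence then follows directly from the cited Proposition.

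There is no genuine obstacle here: the transformation $T$ acts purely on the input side, so no positivity or graph-structural argument enters. The only points requiring mild care are the bookkeeping behind the blockwise identity $\mathcal{R}(A,BT)=\mathcal{R}(A,B)(I_n\otimes T)$ and the elementary linear-algebra fact that right multiplication by a nonsingular matrix preserves the image; both are routine.
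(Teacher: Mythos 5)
Your proof is correct and follows essentially the same route as the paper, which simply asserts the key identity ${\rm Im}({\mathcal R}(A,B))={\rm Im}({\mathcal R}(A,BT))$ and invokes the image-based characterization of herdability. Your blockwise factorization ${\mathcal R}(A,BT)={\mathcal R}(A,B)(I_n\otimes T)$ with $\det(I_n\otimes T)=(\det T)^n\ne 0$ is just the explicit justification of that identity, which the paper leaves to the reader.
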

\begin{proof}
Follows from 
${\rm Im} ({\mathcal R}(A,B))={\rm Im} ({\mathcal R}(A,BT)).$
 $\square$\end{proof}
\medskip

 The following result is elementary and its proof, that can be obtained by recursion on $k$, the number of blocks on the diagonal,  is omitted.\smallskip

\begin{lemma}\label{lemmaB}
Given a matrix   $\Phi \in \mathbb{R}^{n \times d}$, assume that there exist two permutation matrices $P_1\in {\mathbb R}^{n \times n}$ and $P_2\in {\mathbb R}^{d\times d}$ such that $P_1 \Phi P_2$ is block-partitioned as   
\begin{equation}
P_1 \Phi P_2 =	\begin{bmatrix}
\Phi_{11} & \Phi_{12}&\dots & \Phi_{1k}\\
0 & \Phi_{22} &\dots  &\Phi_{2k} \\
\vdots  & \vdots &   \ddots &\vdots \\
 0 & 0 & \dots & \Phi_{kk}
\end{bmatrix},
\label{Phiblocco}
\end{equation}
%
%
where the diagonal blocks $\Phi_{ii}$ are not necessarily square matrices.
If the image of each diagonal block $\Phi_{ii}, i\in [1,k],$ includes a strictly positive vector, then  
there exists ${\bf u} \in {\mathbb R}^k$ such that $\Phi {\bf u} \gg 0$. 
\end{lemma}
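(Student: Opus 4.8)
The plan is to first reduce to the case in which $\Phi$ is itself block upper triangular, and then to argue by induction on the number of diagonal blocks $k$, peeling off the first block row and column at each step.

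For the reduction, note that since $P_1$ is a permutation matrix a vector is strictly positive if and only if its image under $P_1$ is. Hence $\Phi \bfu \gg 0$ if and only if $(P_1 \Phi P_2)(P_2^{-1}\bfu) \gg 0$. Writing $\bfw := P_2^{-1}\bfu$, it suffices to produce a $\bfw$ with $(P_1\Phi P_2)\bfw \gg 0$ and then set $\bfu := P_2 \bfw$. So, replacing $\Phi$ by $P_1\Phi P_2$, I may assume from the outset that $\Phi$ has the block upper triangular form in \eqref{Phiblocco}, with the row and column partitions compatible with the blocks.

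Now I induct on $k$. For $k=1$ the claim is exactly the hypothesis that the image of $\Phi_{11}$ contains a strictly positive vector. For the inductive step, write
$$\Phi = \begin{bmatrix} \Phi_{11} & R \\ 0 & \Phi' \end{bmatrix},$$
where $R = [\,\Phi_{12}\ \dots\ \Phi_{1k}\,]$ collects the off-diagonal part of the first block row and $\Phi'$ is the trailing $(k-1)$-block upper triangular matrix with diagonal blocks $\Phi_{22},\dots,\Phi_{kk}$. By the inductive hypothesis applied to $\Phi'$ there is a vector $\bfw'$ with $\Phi'\bfw' \gg 0$, and by the $k=1$ case applied to $\Phi_{11}$ there is a vector $\bfw_1$ with $\Phi_{11}\bfw_1 \gg 0$. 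Taking the candidate $\bfw = \begin{bmatrix}\lambda\,\bfw_1 \\ \bfw'\end{bmatrix}$ for a scalar $\lambda>0$, the block-triangular structure gives
$$\Phi \bfw = \begin{bmatrix} \lambda\,\Phi_{11}\bfw_1 + R\bfw' \\ \Phi'\bfw' \end{bmatrix},$$
where the lower block is already strictly positive and, crucially, independent of $\lambda$.

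The one remaining point, which I expect to be the only subtlety, is to make the top block strictly positive. Here $R\bfw'$ is a \emph{fixed} vector, while every entry of $\Phi_{11}\bfw_1$ is strictly positive; hence for $\lambda$ large enough each entry of $\lambda\,\Phi_{11}\bfw_1 + R\bfw'$ is positive, so the whole vector $\Phi\bfw$ is strictly positive. Choosing such a $\lambda$ closes the induction, and undoing the permutation reduction yields the desired $\bfu$. The scaling step is the heart of the argument: it uses precisely the \emph{strict} positivity of $\Phi_{11}\bfw_1$, so that a single scalar dominates the finitely many fixed offsets contributed by $R\bfw'$, together with the upper-triangular shape, which guarantees that inflating $\bfw_1$ leaves the already-settled lower blocks untouched.
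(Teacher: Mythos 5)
Your proof is correct and takes exactly the route the paper intends: the paper omits the argument, saying only that it ``can be obtained by recursion on $k$'', and your induction on $k$ --- peeling off the first block and scaling ${\bf w}_1$ by a large $\lambda$ so that $\lambda\,\Phi_{11}{\bf w}_1$ dominates the fixed offset $R{\bf w}'$ while leaving the lower blocks untouched --- is precisely that recursion, with the permutation reduction handled correctly. (The statement's ``${\bf u}\in{\mathbb R}^k$'' is evidently a typo for ${\bf u}\in{\mathbb R}^d$, and you rightly work with a vector of the correct dimension.)
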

\medskip

\begin{remark} \label{remo_romolo}
As a special case, if for every $i\in [1,k]$ we can select a subset of the columns of $\Phi_{ii}$ in  \eqref{Phiblocco} such that: 1) each of them is unisigned; 2) 
for each row index $j$, at least one of these unisigned columns has the $j$-th entry which is nonzero, then 
each vector subspace ${\rm Im}(\Phi_{ii}), i\in [1,k],$ includes a strictly positive vector and hence Lemma \ref{lemmaB} ensures herdability.
\end{remark} 
\smallskip

 Proposition \ref{ablocchi_new}, below, provides a method for the dimensionality reduction of the herdability problem for matrix pairs $(A,B)$, with $A$  and $B$ 
  block-partitioned.
  \medskip

\begin{proposition} \label{ablocchi_new}
Consider a pair $(A,B)$, where $A\in {\mathbb R}^{n \times n}$ and $B\in {\mathbb R}^{n\times m}$ 
are described as in 
 as in 
\be
A =\begin{bmatrix} A_{11} & A_{21}\cr A_{21} & A_{22} \end{bmatrix},
\quad 
B= \begin{bmatrix}B_1\cr 0\end{bmatrix},
\label{blocchi2}
\ee
where  $B_1\in {\mathbb R}^{r\times m}$ is of full row rank
 and $A_{11}\in {\mathbb R}^{r\times r}$.
The pair $(A,B)$ is herdable if and only if  the pair 
$(A_{22},A_{21})$ is herdable. 
 \end{proposition}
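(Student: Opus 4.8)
The plan is to reduce, via the herdability characterization (Corollary~1 in \cite{Ruf_arxiv}), the statement to an identity between the image of the bottom block-row of ${\mathcal R}(A,B)$ and the reachable subspace ${\rm Im}({\mathcal R}(A_{22},A_{21}))$ of the reduced pair; since that characterization is the same in continuous and discrete time, it suffices to argue at the level of controllability matrices. First I would exploit the full row rank of $B_1$: because ${\rm Im}(B_1)={\mathbb R}^r$, the first block column $B=\begin{bmatrix}B_1\cr 0\end{bmatrix}$ of ${\mathcal R}(A,B)$ already satisfies ${\rm Im}(B)=\begin{bmatrix}{\mathbb R}^r\cr 0\end{bmatrix}$, so ${\rm Im}({\mathcal R}(A,B))$ contains every vector $\begin{bmatrix}v\cr 0\end{bmatrix}$, $v\in{\mathbb R}^r$. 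Hence the top $r$ entries of any vector in ${\rm Im}({\mathcal R}(A,B))$ can be made strictly positive without touching the last $n-r$ entries, and $(A,B)$ is herdable if and only if the image of the bottom block-row of ${\mathcal R}(A,B)$ contains a strictly positive vector of ${\mathbb R}^{n-r}$.

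Next I would partition $A^kB=\begin{bmatrix}X_k\cr Y_k\end{bmatrix}$ with $X_k\in{\mathbb R}^{r\times m}$, $Y_k\in{\mathbb R}^{(n-r)\times m}$, so that the bottom block-row of ${\mathcal R}(A,B)$ is $\begin{bmatrix}Y_0 & Y_1 & \dots & Y_{n-1}\end{bmatrix}$. The recursions $X_{k+1}=A_{11}X_k+A_{12}Y_k$ and $Y_{k+1}=A_{21}X_k+A_{22}Y_k$, with $X_0=B_1$, $Y_0=0$, unfold to the closed form $Y_k=\sum_{j=0}^{k-1}A_{22}^{\,k-1-j}A_{21}X_j$. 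The core of the argument is then to show that ${\rm Im}\begin{bmatrix}Y_0 & \dots & Y_{n-1}\end{bmatrix}$ coincides with ${\rm Im}({\mathcal R}(A_{22},A_{21}))$, after which the herdability characterization applied to $(A_{22},A_{21})$ closes the equivalence.

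The inclusion ``$\subseteq$'' is immediate from the closed form, since each $Y_k$ is a sum of terms $A_{22}^{i}A_{21}(\cdot)$ with $i\le k-1$. For ``$\supseteq$'' I would prove by (strong) induction on $i$ that ${\rm Im}(A_{22}^{i}A_{21})\subseteq{\rm Im}\begin{bmatrix}Y_1 & \dots & Y_{i+1}\end{bmatrix}$: the base case $i=0$ follows from $Y_1=A_{21}B_1$ together with ${\rm Im}(B_1)={\mathbb R}^r$, while in the inductive step one isolates the highest-power term in the closed form for $Y_{i+1}$, namely $A_{22}^{i}A_{21}B_1=Y_{i+1}-\sum_{j=1}^{i}A_{22}^{\,i-j}A_{21}X_j$, and observes that every subtracted term carries a strictly lower power of $A_{22}$ and hence, by the inductive hypothesis, lies in ${\rm Im}\begin{bmatrix}Y_1 & \dots & Y_{i}\end{bmatrix}$; using ${\rm Im}(B_1)={\mathbb R}^r$ once more to drop the factor $B_1$ yields the claim. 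Since the reachable subspace of $(A_{22},A_{21})$ stabilises within $n-r\le n-1$ steps, the two images coincide.

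I expect the genuine obstacle to be exactly this last equality. The coupling term $A_{21}X_k$ in the recursion injects the (uncontrolled) leader dynamics into $Y_k$, so one must verify that it never produces directions outside the span of the lower-index $Y_j$'s. This is precisely where the full-row-rank hypothesis on $B_1$ is indispensable --- both to start the induction (it forces ${\rm Im}(A_{21}B_1)={\rm Im}(A_{21})$) and to cancel the factor $B_1$ at each step --- together with the subtraction made available by the closed form for $Y_k$.
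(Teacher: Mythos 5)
Your proof is correct and follows essentially the same route as the paper's: both reduce herdability of $(A,B)$ to whether the image of the bottom block-row of ${\mathcal R}(A,B)$ contains a strictly positive vector, and then show by induction on $k$ that this image equals ${\rm Im}({\mathcal R}(A_{22},A_{21}))$, exploiting that the leading coefficient of the bottom block of $A^kB$ is the surjective factor $B_1$. The only organizational difference is that the paper first normalizes $B_1$ to $I_m$ via Lemma~\ref{lemmaTinput} and packages the induction as a factorization through a block-unitriangular matrix, whereas you keep $B_1$ general and cancel it at each inductive step.
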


\begin{proof}   Let ${\mathcal R}(A,B)$ be the controllability matrix of $(A,B)$ and
${\mathcal R}(A_{22},A_{21})$ the controllability matrix of $(A_{22},A_{21})$.  We preliminarily note  that, 
since $B_1$ is of full row rank, there exists a nonsingular matrix $T\in {\mathbb R}^{m\times m}$ such that $B_1 T = \begin{bmatrix} I_r & 0\end{bmatrix}$. Since the zero columns of $BT$ are irrelevant, in the following 
by making use of Lemma \ref{lemmaTinput} we   assume $r=m$ and  $B_1=I_m$.
Since 
$${\mathcal R}(A,B) = \begin{bmatrix}
I_m & \Phi_{12} \cr
0 & \Phi_{22}\end{bmatrix}$$
where
$$\!\!\begin{array}{rl}
\Phi_{22} &:=
\left[\begin{matrix} A_{21} & A_{21} A_{11} + A_{22} A_{21} \end{matrix}\right.\\
&\left.\begin{matrix} A_{21}(A_{11}^2 + A_{12} A_{21}) + A_{22}  (A_{21} A_{11} + A_{22} A_{21}) & ...
\end{matrix}\right]  \\
&= \begin{bmatrix} 0 & I_{n-m} \end{bmatrix}
\begin{bmatrix} AB & A^2 B & \dots & A^{n-1}B\end{bmatrix},
\end{array}
$$
it is immediate to see that for every ${\bf v}_1\in {\mathbb R}^m$
$${\bf v} =\begin{bmatrix} {\bf v}_1\cr {\bf v}_2\end{bmatrix} \in {\rm Im} ({\mathcal R}(A, B))
\qquad \Leftrightarrow \qquad 
{\bf v}_2 \in {\rm Im} (\Phi_{22}).$$
We now prove that
${\rm Im} (\Phi_{22}) =$ \\ ${\rm Im} \left(\begin{bmatrix} 0 & I_{n-m} \end{bmatrix}\!\!
\begin{bmatrix} AB & A^2 B & \dots & A^{n-1}B\end{bmatrix}\right) \!= {\rm Im}\left({\mathcal R}(A_{22},A_{21})\right).$
\\
To prove this result we show that for every $k\in [1, n-1]$
\be
\!\!\!\!\begin{bmatrix} 0  &  I_{n-m} \end{bmatrix}
  A^{k}B \! = \!
\begin{bmatrix} A_{21} & A_{22} A_{21} & \dots & A_{22}^{k-1} A_{21}
\end{bmatrix}  \begin{bmatrix} *\cr *\cr \vdots \cr I_{m}\end{bmatrix}
\label{perk}
\ee
where $*$ denotes a real matrix (whose value is not relevant).
We proceed by induction on $k$. If $k=1$ the result is true since
$
\begin{bmatrix} 0 & I_{n-m} \end{bmatrix}
  AB = A_{21}=
\begin{bmatrix} A_{21} 
\end{bmatrix} I_{m}.
$\\
We assume now that the result is true for $k < \bar k$ and then   show that the result is true for $k=\bar k$.
Indeed, there exists some matrix $\Xi$ such that
\begin{eqnarray*}
&&\begin{bmatrix} 0 & I_{n-m} \end{bmatrix}
  A^{\bar k}B = \begin{bmatrix} 0 & I_{n-m} \end{bmatrix} A
  A^{\bar k -1}B \\
  &=& 
  \begin{bmatrix} A_{21} & A_{22} 
\end{bmatrix} A^{\bar k -1}B = \begin{bmatrix} A_{21} & A_{22} 
\end{bmatrix} \begin{bmatrix} \Xi\cr 
\begin{bmatrix} 0 & I_{n-m} \end{bmatrix}
  A^{\bar k -1}B 
  \end{bmatrix}\\
  &=& A_{21} \Xi + A_{22} \begin{bmatrix} A_{21} & A_{22} A_{21} & \dots & A_{22}^{\bar k-2} A_{21}
\end{bmatrix} \!\! \begin{bmatrix} *\cr *\cr \vdots \cr I_{m}\end{bmatrix}\\
&=&  \begin{bmatrix} A_{21} & A_{22} A_{21} & \dots & A_{22}^{\bar k-1} A_{21}
\end{bmatrix} \begin{bmatrix} \Xi \cr *\cr \vdots \cr I_{m}\end{bmatrix}.
\end{eqnarray*}
From \eqref{perk}, applied for every $k\in [1,n-1]$, it follows that 
$$\begin{array}{l}
 \begin{bmatrix} 0 & I_{n-m} \end{bmatrix}
\begin{bmatrix} AB & A^2 B & \dots & A^{n-1}B\end{bmatrix} \\
\!\!= \!\! \begin{bmatrix} A_{21} & A_{22} A_{21} & \dots & A_{22}^{n-2} A_{21} \end{bmatrix} \!\!
\begin{bmatrix} I_m & * & \dots &*\cr
& I_m & \dots &*\cr 
&&\ddots & \vdots\cr
&&& I_m\end{bmatrix}
\end{array}
$$
and hence (by Cayley-Hamilton's theorem)
$$
\begin{array}{l}
{\rm Im} (\begin{bmatrix} 0 & I_{n-m} \end{bmatrix}
\begin{bmatrix} AB & A^2 B & \dots & A^{n-1}B\end{bmatrix}) =\cr
{\rm Im} (\begin{bmatrix} A_{21} & A_{22} A_{21} & \dots & A_{22}^{n-2} A_{21} \end{bmatrix}) 
=
 {\rm Im} ({\mathcal R}(A_{22},A_{21})).
 \end{array}
 $$
 Consequently,
the pair $(A,B)$ is herdable if and only if the pair 
$(A_{22},A_{21})$ is herdable.  $\square$\end{proof}
\medskip

\begin{remark}
The result of Proposition \ref{ablocchi_new} can be interpreted as follows: upon partitioning the state vector as ${\bf x} = [{\bf x}_1^\top,{\bf x}_2^\top ]^\top$,
conformably to the block partitioning of the matrices $A$ and $B$ in \eqref{blocchi2}, if $B_1$ is of full row rank, the corresponding subvector ${\bf x}_1$ can be 
controlled  to any point of $\mathbb{R}^m$. Therefore, the herdability of the pair $(A,B)$ is equivalent to the one of the pair $(A_{22},A_{21})$ for which the vector ${\bf x}_1$  acts as the input and the vector ${\bf x}_2$ as the state.
\end{remark}

%
%

\begin{lemma} \label{vandermonde}
 Given a matrix pair $(\Lambda, \Gamma)$, with $\Lambda = {\rm diag}\{\lambda_1, \lambda_2, \dots, \lambda_n\} \in {\mathbb R}^{n\times n}$, 
  and $\Gamma\in {\mathbb R}^{n}$ devoid of zero entries, the pair is herdable if and only 
  $\lambda_i=\lambda_j$ 
  implies $[\Gamma]_i\cdot [\Gamma]_j > 0$, namely the $i$-th and the $j$-th entries of $\Gamma$ have the same sign.
  \end{lemma}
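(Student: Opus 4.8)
The plan is to invoke the herdability criterion recalled earlier (Corollary 1 in \cite{Ruf_arxiv}): the pair $(\Lambda,\Gamma)$ is herdable if and only if ${\rm Im}({\mathcal R}(\Lambda,\Gamma))$ contains a strictly positive vector. The whole argument rests on recognising that, because $\Lambda$ is diagonal, ${\mathcal R}(\Lambda,\Gamma)$ carries a Vandermonde structure (whence the name of the lemma). First I would compute ${\mathcal R}(\Lambda,\Gamma)=\begin{bmatrix}\Gamma & \Lambda\Gamma & \dots & \Lambda^{n-1}\Gamma\end{bmatrix}$ row by row: since $[\Lambda^k\Gamma]_i=\lambda_i^k[\Gamma]_i$, the $i$-th row equals $[\Gamma]_i\begin{bmatrix}1 & \lambda_i & \dots & \lambda_i^{n-1}\end{bmatrix}$. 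Hence a vector ${\bf v}$ lies in ${\rm Im}({\mathcal R}(\Lambda,\Gamma))$ if and only if there is a polynomial $p$ of degree at most $n-1$ with $[{\bf v}]_i=[\Gamma]_i\,p(\lambda_i)$ for all $i$, and herdability becomes the existence of such a $p$ satisfying $[\Gamma]_i\,p(\lambda_i)>0$, that is ${\rm sign}\big(p(\lambda_i)\big)={\rm sign}\big([\Gamma]_i\big)$, for every $i$.

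For necessity I would argue by contradiction: if $\lambda_i=\lambda_j$ while $[\Gamma]_i\cdot[\Gamma]_j<0$, then every admissible polynomial satisfies $p(\lambda_i)=p(\lambda_j)$, so the two products $[\Gamma]_i\,p(\lambda_i)$ and $[\Gamma]_j\,p(\lambda_j)$ are forced to have opposite signs. No choice of $p$ can then produce a strictly positive vector, and the pair fails to be herdable.

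For sufficiency I would partition $[1,n]$ according to the distinct values $\mu_1,\dots,\mu_s$ taken by $\lambda_1,\dots,\lambda_n$, setting $I_r=\{i:\lambda_i=\mu_r\}$. The hypothesis guarantees that, within each $I_r$, all entries $[\Gamma]_i$ share a common sign $\sigma_r\in\{+1,-1\}$. Since the $\mu_r$ are pairwise distinct, Lagrange interpolation furnishes a polynomial $p$ of degree at most $s-1$ with $p(\mu_r)=\sigma_r$ for every $r$; as $s\le n$, this $p$ has degree at most $n-1$ and is therefore admissible. For each $i\in I_r$ one then gets $[\Gamma]_i\,p(\lambda_i)=[\Gamma]_i\,\sigma_r>0$, so applying ${\mathcal R}(\Lambda,\Gamma)$ to the coefficient vector of $p$ exhibits a strictly positive vector in its image, establishing herdability.

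The computation and the two sign arguments are routine; the only point that needs care is the degree bookkeeping in the sufficiency step, namely that the interpolating polynomial built on the $s\le n$ distinct nodes uses only the powers $1,\lambda,\dots,\lambda^{n-1}$ that appear as columns of the controllability matrix, which is exactly what the bound $\deg p\le s-1\le n-1$ guarantees.
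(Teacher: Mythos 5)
Your proof is correct and follows essentially the same route as the paper's: both hinge on the Vandermonde structure of ${\mathcal R}(\Lambda,\Gamma)$, with your Lagrange-interpolation step being the polynomial-language version of the paper's observation that the Vandermonde factor built on the distinct eigenvalues has full row rank, and your necessity argument being the row-wise reading of the paper's unisigned left eigenvector ${\bf w}^\top=[\Gamma]_j{\bf e}_i^\top-[\Gamma]_i{\bf e}_j^\top$ orthogonal to ${\rm Im}({\mathcal R}(\Lambda,\Gamma))$. The only cosmetic slip is the phrase ``forced to have opposite signs'' in the necessity step (if $p(\lambda_i)=0$ both products vanish), but the conclusion that the two entries cannot both be strictly positive stands either way.
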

  
  \begin{proof} We first prove that if the pair $(\Lambda, \Gamma)$ is herdable, then $\lambda_i=\lambda_j$ 
  implies $[\Gamma]_i\cdot [\Gamma]_j > 0$.
  Suppose, by contradiction, that $\lambda_i=\lambda_j=: \lambda$ and  $[\Gamma]_i\cdot [\Gamma]_j < 0$.
  Then it is easy to see that $i\ne j$ and the vector
  ${\bf w}^\top := [\Gamma]_j {\bf e}_i^\top - [\Gamma]_i {\bf e}_j^\top$ 
  satisfies 
  ${\bf w}^\top \Lambda = \lambda {\bf w}^\top,$
  namely ${\bf w}$ is a (left) eigenvector of $\Lambda$ corresponding to $\lambda$,
  and ${\bf w}^\top \Gamma=0$. Consequently, it is immediate to prove that  ${\bf w}$ is orthogonal to ${\rm Im}(\mathcal{R}(\Lambda, \Gamma))$, i.e. 
  ${\bf w}^\top {\mathcal R}(\Lambda, \Gamma)={\bf 0}_n^\top.$
  Since ${\bf w}^T$  is unisigned (since $[\Gamma]_j$ and $- [\Gamma]_i$ have the same sign), it is impossible that there exists a strictly positive vector
  ${\bf v}\in {\rm Im} ({\mathcal R}(\Lambda, \Gamma))$, since this would imply ${\bf w}^\top {\bf v} \ne 0$. Therefore
  the pair $(\Lambda, \Gamma)$ cannot be herdable.
  \smallskip
  
  We now prove that if $\lambda_i=\lambda_j$ 
  implies $[\Gamma]_i\cdot [\Gamma]_j > 0$, then the pair $(\Lambda, \Gamma)$ is herdable.
  \\
 If all entries of $\Gamma$ have the same sign, the pair  $(\Lambda, \Gamma)$ is trivially herdable. So, suppose this is not the case.
  It entails no loss of generality to first permute the entries of $\Gamma$ (and accordingly the rows and columns of $\Lambda$)
  so that the first ones are positive and the last ones are negative. Then we can permute the entries in such a way that
 the identical diagonal entries of $\Lambda$ are consecutive.
 This implies that, under the previous assumptions,  $\Lambda$ and $\Gamma$ take the following form
\be
\Lambda =
 \begin{bmatrix}
 \Lambda_1 & & &\vline& & & \cr
 &\ddots & &\vline&&&\cr
 & & \Lambda_p&\vline &&&\cr
 \hline
 & & & \vline &\Lambda_{p+1} & & \cr
  & & & \vline& &\ddots& \cr
 & & & \vline& && \Lambda_s 
 \end{bmatrix} \ \ \ \
\Gamma = \begin{bmatrix}
 {\bm \gamma}_1\cr \vdots\cr {\bm \gamma}_p\cr \hline
 {\bm \gamma}_{p+1}\cr \vdots\cr {\bm \gamma}_s
\end{bmatrix}
\label{lambdagamma}
\ee

\noindent   where each $\Lambda_i$ is a scalar matrix of size say $n_i$, namely $\Lambda_i=\bar \lambda_i I_{n_i}$ with $\bar \lambda_i\in \{\lambda_1, \dots, \lambda_n\}$,
while ${\bm \gamma}_i\in {\mathbb R}^{n_i}$ is a strictly positive vector for every $i\in [1,p]$ and a strictly negative vector for every $i\in [p+1,s]$. 
Moreover, by the assumption that $\lambda_i=\lambda_j$ 
  implies $[\Gamma]_i\cdot [\Gamma]_j > 0$ 
   we can claim that $\bar \lambda_h\ne \bar \lambda_k$ for $h\ne k$.
It is immediate to see that the controllability matrix of the pair $(\Lambda, \Gamma)$ factorizes as in \eqref{factor}.
\be
 {\mathcal R}(\Lambda, \Gamma)\!=\! 
\begin{bmatrix}
{\bm \gamma}_1 & & &\!\!\vline\!\!& & & \cr
 &\!\! \ddots \!\!  & &\!\!\vline\!\!&&&\cr
 & & {\bm \gamma}_p&\!\!\vline\!\!&&&\cr
 \hline
 & & &\!\!\vline\!\!& {\bm \gamma}_{p+1} & & \cr
  & & &\!\!\vline\!\!&\!\!  &\!\!  \ddots\!\! & \cr
 & & &\!\!\vline\!\!& &&  {\bm \gamma}_s 
 \end{bmatrix}
 \begin{bmatrix}
 1\!\!  & \bar \lambda_1\!\! & \dots &\!\! \bar\lambda_1^{n-1}\cr
\vdots\!\! &\vdots\!\! &\dots &\!\!\vdots\cr
 1 \!\! & \bar\lambda_p\!\! & \dots & \!\!\bar\lambda_p^{n-1}\cr
  1\!\!  & \bar\lambda_{p+1}\!\! & \dots &\!\! \bar\lambda_{p+1}^{n-1}\cr
\vdots\!\! &\vdots\!\! &\dots &\!\!\vdots\cr
 1\!\!  & \bar\lambda_s\!\! & \dots &\!\! \bar\lambda_s^{n-1}
 \end{bmatrix}\!\!
\label{factor}
\ee


\noindent Since $\bar\lambda_1, \dots, \bar\lambda_s$ are all distinct, the Vandermonde matrix on the right of \eqref{factor} is of full row rank. This ensures that
$${\rm Im}  ({\mathcal R}(\Lambda, \Gamma)) =
{\rm Im}\left(\begin{bmatrix}
{\bm \gamma}_1 & & &\vline& & & \cr
 &\ddots & &\vline&&&\cr
 & & {\bm \gamma}_p&\vline &&&\cr
 \hline
 & & & \vline & {\bm \gamma}_{p+1} & & \cr
  & & & \vline& &\ddots& \cr
 & & & \vline& && {\bm \gamma}_s 
 \end{bmatrix}\right)$$
 and since all columns of this latter matrix are unisigned, it is immediate to see
 that there exists a strictly positive vector in its image, and hence
the pair $(\Lambda, \Gamma)$ is herdable.  $\square$\end{proof}

\end{document}